\newcommand{\Z}{{\mathbb Z}}
\def\U{\mathrm{U}(1)}
\def\nfs{N_{\mathrm{FS}}}
\def\TT{\mathcal{T}}
\definecolor{dkgreen}{rgb}{0,0.5,0}
\theoremstyle{definition}
\theoremstyle{remark}
\newtheorem{thm}{Theorem}
\begin{document}

\begin{CJK*}{UTF8}{bsmi}

\title{Extracting higher central charge from a single wave function}

\author{Ryohei Kobayashi}
\email[E-mail: ]{ryok@umd.edu}
\affiliation{Department of Physics, Condensed Matter Theory Center, and Joint Quantum Institute, University of Maryland, College Park, Maryland 20742, USA}

\author{Taige Wang}
\affiliation{Department of Physics, University of California, Berkeley, California 94720 USA}
\affiliation{Material Science Division, Lawrence Berkeley National Laboratory, Berkeley, CA 94720, USA}

\author{Tomohiro Soejima (副島智大)}
\affiliation{Department of Physics, University of California, Berkeley, California 94720 USA}

\author{Roger S. K. Mong (蒙紹璣)}
\affiliation{Department of Physics and Astronomy,
University of Pittsburgh, Pittsburgh, PA 15260, USA}

\author{Shinsei Ryu}
\affiliation{Department of Physics, Princeton University, Princeton, New Jersey, 08544, USA}

\date{\today}
\begin{abstract}
A (2+1)D topologically ordered phase may or may not have a gappable edge, even if its chiral central charge $c_-$ is vanishing. Recently, it is discovered that a quantity regarded as a ``higher'' version of chiral central charge gives a further obstruction beyond $c_-$ to gapping out the edge. 
In this Letter, we show that the higher central charges can be characterized by the expectation value of the \textit{partial rotation} operator acting on the wavefunction of the topologically ordered state. This allows us to extract the higher central charge from a single wavefunction, which can be evaluated on a quantum computer.
Our characterization of the higher central charge is analytically derived from the modular properties of edge conformal field theory, as well as the numerical results with the $\nu=1/2$ bosonic Laughlin state and the non-Abelian gapped phase of the Kitaev honeycomb model, which corresponds to $\mathrm{U}(1)_2$ and Ising topological order respectively. 
The letter establishes a numerical method to obtain a set of obstructions to the gappable edge of (2+1)D bosonic topological order beyond $c_-$, which enables us to completely determine if a (2+1)D bosonic Abelian topological order has a gappable edge or not.
We also point out that the expectation values of the partial rotation on a single wavefunction put a constraint on the low-energy spectrum of the bulk-boundary system of (2+1)D bosonic topological order, reminiscent of the Lieb-Schultz-Mattis type theorems.
 \end{abstract}

\maketitle

\end{CJK*}




\textit{ Introduction --}
(2+1)D topological phases with bulk energy gap host various intriguing physical phenomena~\cite{wen2004quantum}. 
One of the most striking is the bulk-edge correspondence, 
where the property of the bulk heavily constrains dynamics at its boundary. The most celebrated example is the Integer Quantum Hall effect, where nonzero bulk Chern number implies the presence of gapless charged edge modes~\cite{hatsugai1993}. 
Even without charge conservation, systems with nonzero chiral central charge $c_-$, which signals nonzero \textit{thermal} Hall conductance, has gapless edge modes~\cite{KaneFisher}. 
We have a good theoretical understanding of these quantities through coarse-grained Chern-Simons theory, and we can extract them from microscopic wavefunctions~\cite{Kitaevanyons, Mitchell2018amorphus, Kim2022cminus, Kim2022modular, Zou2022modular, Fan2022cminus, Fan2022QHE, Shiozaki2018antiunitary, Dehghani_2021, Cian_2021}.

In the presence of anyonic excitations, there are properties beyond $c_-$ that enforces the presence of gapless edge modes. In many cases, nontrivial braiding statistics between anyons can present an obstruction to gapping out all anyonic degrees of freedom simultaneously at the boundary~\cite{Kapustin:2010hk, Levin2013edge}. Such phases of matter are said to have an \textit{ungappable} edge. Recently, it is discovered that a quantity called \textit{higher central charge} can partially capture ``ungappability'' of the edge~\cite{Ng2018higher, Ng2020higher}. 
In particular, higher central charges of an Abelian topological order completely determines whether it has an ungappable edge~\cite{kaidi2021higher}. However, so far the quantity has been characterized purely through the topological quantum field theory (TQFT) framework, and a microscopic understanding of higher central charges has been lacking.

In this Letter, we show that the expectation value of the \textit{partial rotation} operator -- rotation operator that acts only on a part of the system --
can be used to reliably extract higher central charges of topologically ordered systems. This is the first proposal that relates the wavefunction of a topological ordered state to its higher central charges, 
and our operational definition even allows its evaluation on a quantum computer.  Our finding is supported by an analytical conformal field theory (CFT) calculation, as well as numerics on the non-Abelian phase of the Kitaev honeycomb model and $\nu=1/2$ bosonic Laughlin state. 
This Letter establishes a general numerical method to obtain obstructions to a gappable edge of a bosonic topological order beyond $c_-$, which enables us to completely determine if bosonic Abelian topological order has a gappable edge. 

\begin{figure}[htbp]
    \centering
    \includegraphics[width = 0.48\textwidth]{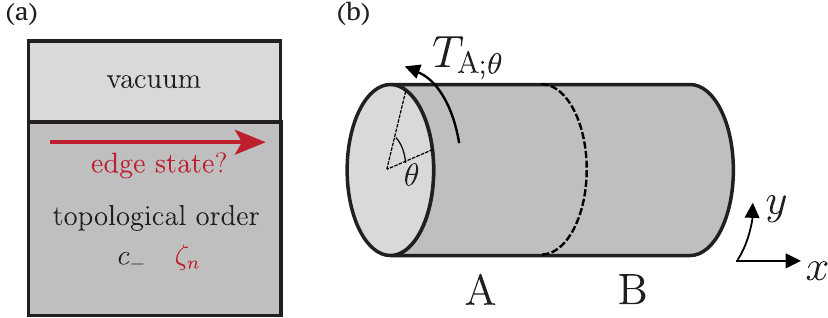}
    \caption{(a) The setup for which we considered the gappability problem. The obstruction can be captured by $c_-$ and higher central charge $\zeta_n$. (b) Schematics of the partial rotation of a cylinder bisected into A and B subsystems.}
    \label{fig:cylinder}    
\end{figure}

\noindent \textit{Definition and properties of higher central charge --} The higher central charges $\zeta_n$ are complex numbers characterizing a topologically ordered state, labeled by a positive integer $n$. $\zeta_n$ can be easily computed from the properties of anyons in the topological order; for a given bosonic (2+1)D topologically order, $\zeta_n$ is defined as the following phase
\begin{align}
    \zeta_n = \frac{\sum_{a} d_a^2\theta_a^n}{\mathopen\big\lvert\sum_{a} d_a^2\theta_a^n\big\rvert}~,
\end{align}
where the sum is over all anyons in the topological order. $d_a$ is quantum dimension, and $\theta_a$ is topological twist of an anyon $a$.
When $n=1$, $\zeta_1$ reduces to the Gauss sum formula for chiral central charge modulo 8, $\zeta_1 = e^{\frac{2\pi i}{8}c_-}$, hence $\zeta_n$ formally provides a generalization of $c_-$. 

Higher central charges put a constraint on the gappability of the edge; it was proven in~\cite{Ng2018higher} that $\zeta_n=1$ for all $n$ such that $\gcd(n, N_{\mathrm{FS}})=1$ give necessary conditions for a gappable edge. Here, $N_{\mathrm{FS}}$ is called the Frobenius-Schur exponent, defined as the smallest positive integer satisfying $\theta_a^{N_{\mathrm{FS}}}=1$ for all anyons $a$. For example, $\mathrm{U}(1)_2\times \mathrm{U}(1)_{-4}$ Chern-Simons theory has $\zeta_3 = -1$, which shows that the topological order has an ungappable edge even though $c_-=0$. 
For (2+1)D bosonic Abelian topological phases one can also derive sufficient conditions: the higher central charges $\{\zeta_n\}$ for $\gcd(n,\frac{N_{\mathrm{FS}}}{\gcd(n,N_{\mathrm{FS}} )})=1$ give both necessary and sufficient conditions for a gappable boundary \cite{kaidi2021higher}.

\textit{ Main result --} 
To extract higher central charges from a single wavefunction, we consider a (2+1)D topological ordered state located on a cylinder. The state on the cylinder is labeled by the anyon $a$, which corresponds to a quasiparticle obtained by shrinking the puncture at the end of the cylinder.
Suppose we have realized a ground state $\ket{\Psi}$ on the cylinder labeled by the trivial anyon $1$.
Let us take a bipartition of the cylinder into the two subsystems labeled by A and B, and write the translation operator for the A subsystem by the angle $\theta$ along the circumference as $T_{\mathrm{A};\theta}$ (see Fig.~\ref{fig:cylinder}). We then find that the following quantity extracts $\zeta_n$,
\begin{align}
\begin{split}
\mathcal{T}_1\left(\frac{2\pi}{n}\right) &:={\bra{\Psi} T_{\mathrm{A};\frac{2\pi}{n}}\ket{\Psi}} 
\propto e^{-2\pi i (\frac{2}{n}+n)\frac{c_-}{24}} \times \sum_a d_a^2\theta_a^n~,
\label{eq:mainresult}
\end{split}
\end{align}
where $\propto$ in this Letter always means being proportional up to a positive real number.
In the special case where $n=1$, the rhs becomes 1 since $\sum_a d_a^2\theta_a\propto e^{\frac{2\pi i}{8}c_-}$, consistent with the fact that the $2\pi$ rotation of the cylinder A gives the identity. For $n > 1$ and $\gcd(n, N_{\mathrm{FS}})=1$, the above rhs becomes proportional to $\zeta_n$ and gives a non-trivial obstruction to gapped boundary beyond $c_-$. Since $c_-$ can be extracted from a single wavefunction \cite{Kim2022cminus, Kim2022modular}, our method allows a complete characterization of all higher central charges. 

For (2+1)D bosonic Abelian topological order one can show that partial rotation, together with topological entanglement entropy~\cite{KitaevTEE, Levin2006Detecting}, fully determines if its edge is gappable. See Supplemental Materials for an explicit algorithm determining gappability. We also note that partial rotation, which is unitary, can be easily evaluated on a quantum computer using methods such as the Hadamard test.

\textit{Analytic derivation --}
Eq.~\eqref{eq:mainresult} can be derived by employing the cut-and-glue approach established in~\cite{KitaevTEE, Qi2012entanglement}, which describes the entanglement spectrum of the A subsystem at long wavelength by that of the (1+1)D CFT on its edges~\footnote{While the argument in~\cite{Qi2012entanglement} is valid for chiral (i.e., holomorphic) edge CFT, one can utilize the cut-and-glue approach for non-chiral edge CFT as well, as demonstrated in Supplemental Materials.}.
Namely, the reduced density matrix for the A subsystem is effectively given by $\rho_{\mathrm{A}} =\rho_{\mathrm{A};l}\otimes \rho_{\mathrm{A};r}$,
where $\rho_{\mathrm{A};l}, \rho_{\mathrm{A};r}$ denote the CFTs on the left and right edges respectively. The left edge lies at the end of the whole cylinder realizing the ground state of CFT;
the right edge of the A subsystem entangled with the B subsystem is described by a thermal density matrix of a perturbed edge CFT~\cite{Haldane2008entanglement}.
The form of the perturbation in the entanglement Hamiltonian is not universal.
In the following, we assume that the entanglement Hamiltonian is that of the unperturbed CFT: $\rho_{\mathrm{A};r} = e^{-\beta_r H_r}$, and check the validity of this assumption with our numerics.

Since the operator $T_{\mathrm{A};\theta}$ acts as the translation of the edge CFT, the partial rotation is expressed as the expectation values of translation operators within the edge CFT as
\begin{align}
\begin{split}
    \mathcal{T}_1\left(\frac{2\pi}{n}\right)  &= \frac{\mathrm{Tr}\big[ e^{i{P}_l\frac{L}{n}}e^{-\frac{\xi_l}{v}H_l} \big] \mathrm{Tr}\big[e^{i{P}_r\frac{L}{n}}e^{-\frac{\xi_r}{v}H_r} \big]}{\mathrm{Tr}\big[ e^{-\frac{\xi_l}{v}H_l} \big] \mathrm{Tr}\big[ e^{-\frac{\xi_r}{v}H_r} \big]} \\
    &= \frac{\chi_1\big(\frac{i\xi_l}{L}+\frac{1}{n}\big) \, \chi_1\big(\frac{i\xi_r}{L}-\frac{1}{n}\big) }{\chi_1\big(\frac{i\xi_l}{L}\big) \, \chi_1\big(\frac{i\xi_r}{L}\big)}
    ,\label{eq:rotasCFT}
    \end{split}
\end{align}
where we introduced the velocity $v$, correlation length $\xi_l = v \beta_l$, $\xi_r = v \beta_r$, and the circumference of the cylinder $L$.
${P}_l$ and ${P}_r$ are translation operators on the left and right edge ${P}_l=-\frac{1}{v}H_l$, ${P}_r=\frac{1}{v}H_r$. $\chi_1(\tau)$ is the CFT character of the trivial sector with modular parameter $\tau$. In our setup where $L \ll \xi_l$, the characters for the left edge are approximated as
\begin{align}
    \chi_1\left(\frac{i\xi_l}{L}\right) \approx e^{\frac{2\pi \xi_l}{L}\frac{c_-}{24}}, \ \chi_1\left(\frac{i\xi_l}{L}+\frac{1}{n}\right) \approx e^{\frac{2\pi \xi_l}{L}\frac{c_-}{24}} e^{-\frac{2\pi i}{n}\frac{c_-}{24}}.
    \end{align}
Meanwhile, the edge CFT at the right edge cutting the system has high temperature $L \gg \xi_r$. These characters can be approximately computed by performing proper modular $S, T$ transformations as~\cite{Shiozaki2017point}
\begin{align}
    \begin{split}
        \chi_1\left(\frac{i\xi_r}{L}\right) &= \sum_a S_{1,a} \chi_a\left(\frac{iL}{\xi_r}\right) \approx \frac{1}{\mathcal{D}}e^{\frac{2\pi L}{\xi_r}\frac{c_-}{24}},
    \end{split}
\\
    \begin{split}
        \chi_1\left(\frac{i\xi_r}{L}-\frac{1}{n}\right) &= \sum_a (ST^n S)_{1,a} \chi_a\left(\frac{iL}{n^2\xi_r}+\frac{1}{n}\right) \\
        &\approx (ST^n S)_{1,1}e^{-\frac{2\pi i}{n}\frac{c_-}{24}} e^{\frac{2\pi L}{n^2\xi_r}\frac{c_-}{24}} \\
        &= \frac{1}{\mathcal{D}^2} e^{-2\pi i(n + \frac{1}{n})\frac{c_-}{24}} e^{\frac{2\pi L}{n^2\xi_r}\frac{c_-}{24}}\sum_{a}d_a^2\theta_a^n,
        \end{split}
\end{align}
 where $n$ is assumed to be small satisfying $n^2\ll L/\xi_r$.
 The sum is over the anyons $a$ that labels the conformal block of the edge CFT, and $\mathcal{D} = \sqrt{\sum_a d_a^2}$ is the total quantum dimension. By combining the above approximations of the characters, $\mathcal{T}_1\left(2\pi/n\right)$ in Eq.~\eqref{eq:rotasCFT} is expressed as Eq.~\eqref{eq:mainresult}.

A similar computation can be performed when the ground state lives in a generic topological sector, 
\begin{align} 
\begin{split}
\mathcal{T}_a\left(\frac{2\pi}{n}\right) \propto  e^{\frac{2\pi i}{n} h_a-2\pi i (\frac{2}{n}+n)\frac{c_-}{24}} \times \zeta_{n,a}~,
\label{eq:fullresulttwisted_main}
\end{split}
\end{align}
where $\mathcal{T}_a\left({2\pi}/{n}\right) := \bra{\Psi_a} T_{\mathrm{A};{2\pi}/{n}}\ket{\Psi_a}$ with $\ket{\Psi_a}$ being the ground state in the topological sector labeled by an anyon $a$. We defined twisted higher central charge
\begin{align}
    \zeta_{n,a}:=\sum_{b}S_{ab}d_b\theta_b^n~,
\end{align} 
which is proportional to $\zeta_n$ when $a=1$.
The derivation of Eq.~\eqref{eq:fullresulttwisted_main} is relegated to Supplemental Materials.

While the definition of the quantity Eq.~\eqref{eq:mainresult} is akin to that of the momentum polarization in the large $n$ limit~\cite{Qi2012momentumpolarization, FQHEDMRG}, we emphasize that the partial rotation by the finite angle $\TT_a(2\pi/n)$ extracts a completely different universal quantity from the momentum polarization. Indeed, the momentum polarization with $n\to\infty$ does not give the higher central charge, which is expressed as 
\begin{equation} \label{eq:momentumpolarization}
    \lim_{n \to \infty} \mathcal{T}_a\left(\frac{2\pi}{n}\right) \propto\exp \left[\frac{2 \pi i }{n} \left(h_a-\frac{c_-}{24}-\frac{c_-}{24} \frac{L^2}{\xi_r^2} \right)\right] .
\end{equation}
Remarkably, while Eq.~\eqref{eq:momentumpolarization} depends on the circumference $L$ and the non-universal correlation length $\xi_r$, Eq.~\eqref{eq:mainresult} solely gives a constant universal value as the combination of $c_-$ and $\zeta_n$. 
In Supplemental Materials, we describe how the behavior of the partial rotation interpolates between higher central charge and momentum polarization.

\begin{figure}
    \centering
    \includegraphics[width=\linewidth]{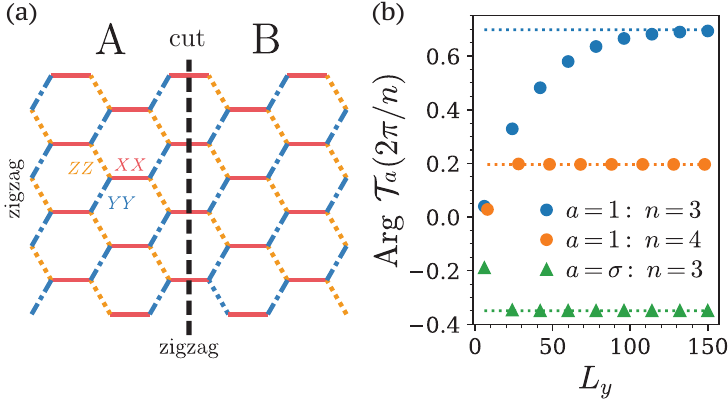}
    \caption{(a) Geometry of the Kitaev model on a cylinder. Red, blue and yellow lines correspond to $X$, $Y$ and $Z$ type Ising interactions, respectively. The lattice is periodic in the $y$ direction, and has the zigzag boundary condition in the $x$ direction. (b) The partial rotations $\mathcal{T}_a(2\pi/n)$ evaluated in the Ising topological phase 
    of the Kitaev model at $n=3,4$. 
    The $\sigma$ sector at $n=4$ is not shown since it evaluates zero. We used $J_x = J_y = J_z = 1, \kappa = 0.1$ for computation.}
    \label{fig:kitaev_main_text}
\end{figure}

\textit{Numerical results --} 
We demonstrate the validity of the formula Eq.~\eqref{eq:mainresult} for two examples: the Ising TQFT realized by the Kitaev honeycomb model, and the $\mathrm{U}(1)_2$ TQFT realized by the $\nu=1/2$ bosonic Laughlin state. Their $\zeta_{n,a}$ and expected values of the partial rotation $\mathcal{T}_a({2\pi}/{n})$ are summarized in Table~\ref{tab:phases}. For some of the $n$'s in a given topological sector, the magnitude of $\mathcal{T}_1$ vanishes. However, this could only occur when $\gcd(n, N_{\mathrm{FS}}) \neq 1$, which therefore does not obscure the examination of whether the topological order has a gappable boundary.

The Kitaev honeycomb model is defined on a honeycomb lattice with a qubit on each vertex, with the Hamiltonian
\begin{align}
\begin{split}
    H &= J_x \sum_{\langle ij\rangle\in \text{R edge}}X_iX_j  + J_y\sum_{\langle ij\rangle\in \text{B edge}}Y_iY_j  \\
     &\quad + J_z\sum_{\langle ij\rangle\in \text{Y edge}}Z_iZ_j + \kappa \sum_{\langle ijk \rangle} X_i Y_j Z_k,
    \end{split}
\end{align}
where the last term is introduced by turning on magnetic field, which realizes the non-Abelian gapped phase~\cite{Kitaevanyons}.
The non-Abelian phase is known to host Ising TQFT with anyons $1$, $\sigma$, $\psi$ with topological twists $\theta_1 = 1$, $\theta_\sigma = e^{2\pi i/16}$, $\theta_\psi = -1$.

\begin{table}[]
\renewcommand*{\arraystretch}{1.5}
\centering
\begin{tabular}{cc|c|c}
\hline
       & sector $a$  & $\zeta_{n,a}$ & $\mathcal{T}_a\left(\frac{2\pi}{n}\right)$ \\ \hline
\multirow{2}{*}{Ising}  & Trivial $1$ & $e^{\frac{2\pi i}{16}},e^{\frac{2\pi i}{16}},e^{\frac{6\pi i}{16}},e^{\frac{4\pi i}{16}}$   &  $1,1,e^{\frac{2\pi i}{9}}, e^{\frac{\pi i}{16}}$  \\ 
                        & $\sigma$     & $1,0,1,0$  &   $1,0,e^{-\frac{\pi i}{9}},0$ \\ \hline
\multirow{2}{*}{$\mathrm{U}(1)_2$} & Trivial $1$ & $e^{\frac{2\pi i}{8}}, 0, e^{-\frac{2\pi i}{8}}, 1 $  & $1, 0, e^{\frac{13\pi i}{9}}, e^{\frac{13\pi i}{8}}$   \\ 
                        & Semion $s$  &  $e^{-\frac{2\pi i}{8}}, 1, e^{\frac{2\pi i}{8}}, 0$ &   $1,1, e^{\frac{\pi i}{9}}, 0$ \\ \hline
\end{tabular}
\caption{The phases of $\zeta_{n,a}$ and the partial rotation $\mathcal{T}_a(\frac{2\pi}{n})$ for $n=1,2,3,4$ in each topological sector of Ising and $\mathrm{U}(1)_2$. We write $0$ when the magnitude is vanishing.}
\label{tab:phases}
\end{table}

To compute partial rotation, we employ a cylinder geometry terminated with zigzag boundary condition on both ends as depicted in Fig.~\ref{fig:kitaev_main_text}, and we act on the left half of system with partial rotation.

The model is equivalent to a system of free Majorana fermions coupled to $\Z_2$ gauge field, by rewriting the qubits using Majorana fermion operators $c$, which act as dynamical free fermions, and $b$, which describes the $\mathbb{Z}_2$ gauge field.
As demonstrated in Supplemental Materials, the partial rotation for the state on the cylinder lying in the trivial sector can be expressed as
\begin{align}
        \mathcal{T}_1\left(\frac{2\pi}{n}\right)\propto\mathrm{Tr}\left(\frac{1+(-1)^F}{2}e^{-H_E}T_{\mathrm{A};\frac{2\pi}{n}}\right),
\end{align}
where $H_E$ is the entanglement Hamiltonian for the free fermion state in the A subsystem with the fixed flat $\Z_2$ gauge field, with the boundary condition in $y$ direction taken to be anti-periodic. The operator $(1+(-1)^F)/2$ gives a projector onto the Hilbert space with even fermion parity. Following~\cite{Qi2012momentumpolarization}, one can further evaluate it from the entanglement spectrum of the free Majorana fermions:
\begin{equation}
\begin{split}
    \mathcal{T}_1\left(\frac{2\pi}{n}\right) &\propto \prod_{m, k_y} \left[\frac{1+e^{ik_y L_y/n}}{2} + \frac{1-e^{ik_y L_y/n}}{2}\tanh\frac{\xi_{mk_y}}{2} \right] \\
    &+ \prod_{m, k_y} \left[\frac{1-e^{ik_y L_y/n}}{2} + \frac{1+e^{ik_y L_y/n}}{2}\tanh\frac{\xi_{mk_y}}{2} \right]
\end{split}
\label{eq:t1_from_majorana_spectrum}
\end{equation}
where $\xi_{mk_y}$ is the entanglement spectrum for $H_E$, carried by a quasiparticle with momentum $k_y$ in $y$ direction. 
Analogously, the partial rotation for the $\sigma$ sector is expressed in terms of the entanglement Hamiltonian $H_E^\sigma$ given by setting the periodic boundary condition in $y$ direction, $\mathcal{T}_{\sigma}(2\pi/n)\propto\mathrm{Tr}(e^{-H^\sigma_E}T_{\mathrm{A};2\pi/n})$, which can also be computed from entanglement spectrum of $H_E^\sigma$.

We show the result of this evaluation for $1, \sigma$ sectors in Fig.~\ref{fig:kitaev_main_text}. We see that $\mathrm{Arg}\left(\mathcal{T}_a\left({2\pi}/{n}\right)\right)$ converges to predicted values. 
We only present for $n \geq 3$ and $\left | \mathcal{T}_a\left({2\pi}/{n}\right) \right | > 0$. $\mathcal{T}_a(2\pi/n)$ is always real (no phase) for $n = 1$ and $2$ since the phase part exactly cancels. 

\begin{figure}[htbp]
    \centering
    \includegraphics[width = 0.48\textwidth]{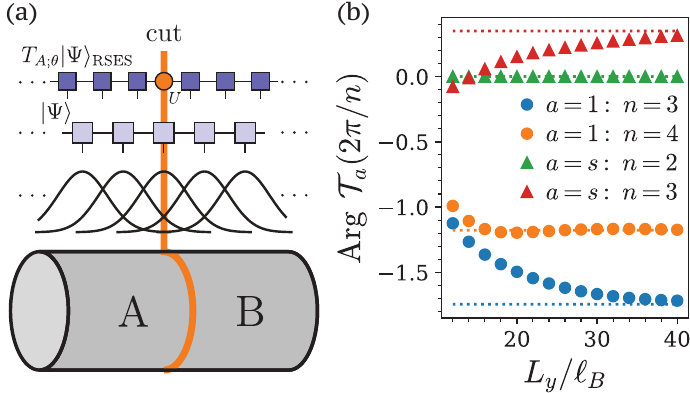}
    \caption{(a) A schematics of the infinite cylinder geometry and the LLL orbital basis of the MPS. Partial rotation along a real-space cut can be accomplished by acting a unitary operator on the auxiliary bond of the MPS obtained by the RSES algorithm. (b) $\operatorname{Arg} \mathcal{T}_a(2\pi/n)$ of the $\nu = 1/2$ bosonic Laughlin state extracted using Eq.~\eqref{eq:schmidt}.
    The dotted lines are the CFT predictions given in Table \ref{tab:phases}. }
    \label{fig:Laughlin}    
\end{figure}

The second example is the $\nu = 1/2$ bosonic Laughlin state, which realizes 
the
$\mathrm{U}(1)_2$ Chern-Simons theory. Its only non-trivial anyon is the semion $s$ with $\theta_s = i$.

The model we study is a half-filled lowest Landau level (LLL) of two-dimensional bosons with a contact interaction $V_0 = 1$ plus a small perturbation $\delta V_2 = 0.1$, where $V_m$ are the Haldane pseudopotentials \cite{Haldane_1994_pseudopotential,halperin_jain_cooper_2020}.
We consider an infinite cylinder geometry (Fig.~\ref{fig:Laughlin} (a)), and use infinite density matrix renormalization group (iDMRG) calculations~\cite{PhysRevB.91.045115} to obtain the infinite matrix product state (iMPS) representation of the ground state $|\Psi\rangle$.

Compared to other numerical methods, the MPS representation is advantageous for evaluating the action of partial rotation. If rotation is a good symmetry, the Schmidt states $|\alpha\rangle_{\mathrm{A/B}}$ across subsystems A and B have definite momentum $k_y^{\alpha}$ along the circumference. Thus, the action of partial rotation can be evaluated by
\begin{equation} \label{eq:schmidt}
\mathcal{T}_a(\theta) = \sum_{\alpha} \lambda_{\alpha}^2 e^{i k_y^{\alpha} L_y \theta},
\end{equation}
where $\lambda_{\alpha}$ is the corresponding Schmidt value. We can easily obtain both $k_y^{\alpha}$ and $\lambda_{\alpha}$ from the momentum label $\bar{K}_{\bar{n}_{\mathrm{B}};\alpha}$ and the Schmidt value $\lambda_{\bar{n}_{\mathrm{B}};\alpha}$ of the auxiliary bond $\bar{n}_{\mathrm{B}}$ across subsystems A and B.

For the $\nu = 1/2$ bosonic Laughlin state, we work in the Landau gauge and the corresponding LLL orbital basis.
To accelerate the calculation and obtain the momentum label mentioned above, we incorporate both particle number $\hat{C} =\sum_n \hat{C}_n \equiv \sum_n(\hat{N}_n-\nu)$ and momentum $\hat{K} =\sum_n \hat{K}_n \equiv \sum_n n(\hat{N}_n-\nu)$ conservation, where $\hat{N}_n$ is the number operator at site $n$. We find that $\mathcal{T}_a(2\pi/n)$ converges at bond dimension $\chi = 3200$, cylinder circumference $L_y = 40 \ell_B$ and onsite boson number cutoff $N_{\mathrm{boson}} = 5$.

 We note there are a few technical complications in applying Eq.~\eqref{eq:schmidt} to compute $\mathcal{T}_a(\theta)$, which we will scratch here and readers can find more details in Supplemental Materials. Firstly, there are a few ambiguities in extracting the physical momentum $k_y^{\alpha}$ from the momentum label $\bar{K}_{\bar{n}_{\mathrm{A}};\alpha}$. For iMPS, there is an overall ambiguity of momentum labels on auxiliary bonds. The magnetic translation symmetry in quantum Hall systems further tangles the momentum label $\bar{K}_{\bar{n} ; \alpha}= \langle \sum_{n < \bar{n}} \hat{K}_n\rangle_{\alpha}$ with the charge label $\bar{C}_{\bar{n} ; \alpha}= \langle \sum_{n < \bar{n}} \hat{C}_n \rangle_{\alpha}$ \cite{exactmps}.
 These ambiguities can be fixed by matching the entanglement spectrum and the edge CFT spectrum as elaborated in Supplemental Materials. 

Secondly, which topological sector subsystem A, B belongs to depends on the cut. The $\nu = 1/2$ bosonic Laughlin state has a two-fold ground state degeneracy, characterized by root configuration (pattern of zeros) $[01]$ and $[10]$ \cite{root_Haldane_2008,root_wen_2008}. It turns out that cutting through the LLL orbital center that corresponds to the $0$'s ($1$'s) bisects the system into two trivial sectors (semion) sectors. Finally, when we work in the LLL orbital basis, an auxiliary bond divides the system into two sets of LLL orbitals instead of two regions of physical space. This problem can be resolved using the real-space entanglement spectrum (RSES) algorithm developed in~\cite{exactmps}. We note that many of the technicalities discussed here are not specific to the $\nu = 1/2$ bosonic Laughlin state, but provide a general procedure for computing higher central charge of arbitrary wavefunction in the MPS form.

Finally we present the result of $\mathcal{T}_a(2\pi/n)$ in both the trivial and the semion sectors.
As shown in Fig.~\ref{fig:Laughlin} (b), $\mathcal{T}_a(2\pi/n)$ always converges to the expected phase as shown in Table \ref{tab:phases} at sufficiently large $L_y$.

\textit{Discussion --} In this Letter, we characterize the higher central charges 
$\{\zeta_n\}$ in terms of the partial rotation evaluated on a wavefunction of the (2+1)D bosonic topological order, and confirmed the prediction using the Kitaev honeycomb model and the $\nu=1/2$ bosonic Laughlin state.
Partial rotation can be implemented easily in quantum computing architectures with cheap SWAP gates, such as Rydberg atom arrays, which opens up another avenue to studying topological order directly on a quantum computer.
Together with topological entanglement entropy, partial rotation allows us to fully determine edge gappability of Abelian topological order.

It would be interesting to study applications of partial rotation to generic non-Abelian topological phases. Remarkably, even for non-Abelian phases, numerical results of $\{\TT_1(2\pi/n)\}$ put a tight constraint on the possible low-energy spectrum of the bulk-boundary system. For instance, 
suppose that we observed $\{\TT_1(2\pi/p_j)\}$ is a non-trivial phase for a set of distinct prime numbers $\{p_j\}$.
One can see that this leaves us two possibilities: 1. the edge is ungappable, or 2. the edge is gappable, where $\nfs$ must be divisible by $\prod_j p_j$. 
If the minimal $\nfs$ required for a gappable edge is large and physically unrealistic, one can essentially determine that the boundary must be ungappable. 

Notably, the lower bound $\nfs\ge \prod_j p_j$ for a gappable edge implies the lower bound for the number of anyons $r$ given by $r\ge r_0$, with $r_0$ the smallest integer satisfying $2^{2r_0/3+8}3^{2r_0/3} \ge \prod_j p_j$. This is derived from the fact that $\nfs$ of the bosonic topological order with $r$ distinct anyons has the upper bound $\nfs\le 2^{2r/3+8}3^{2r/3}$~\cite{Bruillard2015rankfiniteness}. It implies that the ground state on a torus must carry at least $r_0$-fold degeneracy in order to realize a gappable edge. 
This argument is reminiscent of the Lieb-Schultz-Mattis type theorems~\cite{Lieb1961, Oshikawa, Hastings}, which constrains the low-energy spectrum for a given input of the symmetry action on the ground state. 

Also, it would be interesting to extract the higher Hall conductivity proposed in~\cite{Kobayashi2022FQH}, which gives an obstruction to $\mathrm{U}(1)$ symmetry-preserving gapped boundary of the fermionic topological order with $\mathrm{U}(1)$ symmetry beyond electric Hall conductivity and $c_-$.

\textit{Acknowledgements --}  We thank Michael Zaletel, Roman Geiko, and Tianle Wang for helpful discussions.
RK is supported by the JQI postdoctoral fellowship at the University of Maryland.
TW is supported by the U.S.\ DOE, Office of Science, Office of Basic Energy Sciences, Materials Sciences and Engineering Division, under Contract No.\ DE-AC02-05-CH11231, within the van der Waals Heterostructures Program (KCWF16).
TS is supported by a fellowship from the Masason foundation.
RM is supported by the National Science Foundation under 
Award No.\ DMR-1848336.
SR is supported by the National Science Foundation under 
Award No.\ DMR-2001181, and by a Simons Investigator Grant from
the Simons Foundation (Award No.~566116).
This work is supported by
the Gordon and Betty Moore Foundation through Grant
GBMF8685 toward the Princeton theory program. This research used the Lawrencium computational cluster resource provided by the IT Division at the Lawrence Berkeley National Laboratory (Supported by the Director, Office of Science, Office of Basic Energy Sciences, of the U.S. Department of Energy under Contract No. DE-AC02-05CH11231).

\bibliography{bibliography.bib}

\onecolumngrid

\vspace{0.3cm}

\begin{center}
\Large{\bf Supplemental Materials}
\end{center}
\onecolumngrid


\section{Complete determination of edge gappability for bosonic Abelian phases}
Here, we show that the partial rotation $\{\mathcal{T}_1\left(2\pi/n\right)\}$, together with topological entanglement entropy~\cite{KitaevTEE, Levin2006Detecting}, can completely determine if the boundary of the (2+1)D bosonic Abelian topological phase is gappable or not. Concretely, a complete set of the gappability conditions can be obtained from a single wavefunction by the following steps:
\begin{enumerate}
    \item First, compute the topological entanglement entropy of the wavefunction to obtain the total quantum dimension $\mathcal{D}$ of the Abelian TQFT. Since the theory is Abelian, $\mathcal{D}^2$ gives the number of distinct anyons including the trivial one.
    \item Next, we want to extract the chiral central charge $c_-$. To do this, we notice that the Frobenius-Schur exponent $N_{\mathrm{FS}}$ of the Abelian TQFT must divide $2\mathcal{D}^2$ as explained in the next step. One can then extract $c_-$ by computing the partial rotation with $n = 2\mathcal{D}^2$,
    \begin{align}
        \mathcal{T}_1\left(\frac{2\pi}{2\mathcal{D}^2}\right) \propto e^{-2\pi i (\frac{2}{2\mathcal{D}^2}+2\mathcal{D}^2)\frac{c_-}{24}}~.   \end{align}
\item Finally, we want to extract the higher central charges to cover all obstructions to a gapped edge beyond $c_-$. For convenience, let us write the prime factorization of $\mathcal{D}^2$ as
\begin{align}
    \mathcal{D}^2 = p_1^{k_1}\times p_2^{k_2}\times \dots \times p_M^{k_M}
\end{align}
with $p_1< p_2<\dots < p_M$ prime numbers, and $k_1,k_2,\dots,k_M$ positive integers. Then, the Abelian TQFT $\mathcal{C}$ also factorizes as~\cite{kaidi2021higher}
\begin{align}
    \mathcal{C} = \mathcal{C}_{p_1} \boxtimes \mathcal{C}_{p_2} \boxtimes \dots \boxtimes  \mathcal{C}_{p_M}.
\end{align}
Here, $\boxtimes$ denotes the Deligne product of modular tensor categories that physically represents stacking of the theories. Each Abelian TQFT $\mathcal{C}_{p_j}$ contains the $p_j^{k_j}$ anyons including the trivial one. The Frobenius-Schur exponent also naturally admits the factorization as
\begin{align}
    N_{\mathrm{FS}} = N_1N_2\dots N_{M},
\end{align}
where $N_j$ gives the Frobenius-Schur exponent of the theory $\mathcal{C}_{p_j}$. Each $N_j$ is expressed by a positive integer power of $p_j$, $N_j = p_j^{l_j}$, where $l_j$ satisfies
\begin{align}
    \begin{cases}
        1\le l_1 \le k_1+1 & \text{if $p_1=2$,} \\
        1\le l_j \le k_j & \text{otherwise}.
    \end{cases}
\end{align}
This explains why $N_{\mathrm{FS}}$ must divide $2\mathcal{D}^2$ which is used in the previous step. 

Now we are ready to describe the all obstructions to a gapped edge in terms of the partial rotation. For this purpose, we recall that the Abelian TQFT admits a gapped edge if and only if all the theories $\mathcal{C}_{p_j}$ do~\cite{kaidi2021higher}. 
Then, each theory $\mathcal{C}_{p_j}$ admits a gapped edge if and only if the higher central charge $\zeta_n(\mathcal{C}_{p_j}) = 1$ for all $1\le n
\le N_j$ such that $\gcd(n,p_j) = 1$~\cite{drinfeld2010braided}. One can cover all the above higher central charges of $\mathcal{C}_{p_j}$ by computing the partial rotation $\mathcal{T}_1(2\pi/n)$ with
\begin{align}
    n = \frac{m\times 2\mathcal{D}^2}{p_j^{\tilde{k}_j}},
\end{align}
where $m$ scans all integers $1\le m\le p_j^{\tilde{k}_j}$ such that $\gcd(m,p_j)$ = 1. Here, we define the integers $\tilde{k}_j$ as
\begin{align}
    \begin{cases}
        \tilde{k}_1 = k_1+1 & \text{if $p_1 = 2$,} \\
        \tilde{k}_j=k_j & \text{otherwise}.
    \end{cases}
\end{align}
Let us check that above partial rotations exhaust all desired higher central charges of $\mathcal{C}_{p_j}$. Since $2\mathcal{D}^2/p_j^{\tilde{k}_j}$ is coprime with $p_j$ and divisible by $N_i$ for all $i$ such that $i\neq j$, one can write the partial rotation with $n = m\times 2\mathcal{D}^2/p_j^{\tilde{k}_j}$ as
\begin{align}
\begin{split}
    \mathcal{T}_1\left(\frac{2\pi}{n}\right) &\propto e^{-2\pi i (\frac{2}{n}+n)\frac{c_-}{24}}\times \sum_{a\in\mathcal{C}} \theta_a^n \\
    &= e^{-2\pi i (\frac{2}{n}+n)\frac{c_-}{24}}\times \left(\sum_{a\in\mathcal{C}_{p_1}} \theta_a^n \right) \times\dots \times \left(\sum_{a\in\mathcal{C}_{p_M}} \theta_a^n \right) \\
    &\propto e^{-2\pi i (\frac{2}{n}+n)\frac{c_-}{24}}\times\sum_{a\in\mathcal{C}_{p_j}} \theta_a^n \\
    &\propto e^{-2\pi i (\frac{2}{n}+n)\frac{c_-}{24}}\times\zeta_n(\mathcal{C}_{p_j}).
    \end{split}
\end{align}
One can then see that $\zeta_n(\mathcal{C}_{p_j})$ covers all the desired higher central charges by scanning $m$, since $n$ mod $N_j$ covers all $1\le n \le N_j$ such that $\gcd(n,p_j)=1$.
Since we do not know $N_{\mathrm{FS}}$ or $N_j$ in advance, we want to take the range of $m$ as $1\le m\le p_j^{\tilde{k}_j}$ so that $1\le m\le N_j$ is included.  This shows that the above $\mathcal{T}_1\left(2\pi/n\right)$ covers all gappability obstructions of $\mathcal{C}_{p_j}$.

Summarizing, we can obtain all gappability obstructions of the whole theory $\mathcal{C}$ beyond $c_-$, by computing the partial rotation 
$\mathcal{T}_1(2\pi/n)$ with
\begin{align}
    n = \frac{m\times 2\mathcal{D}^2}{p_j^{\tilde{k}_j}},
\end{align}
where $j$ scans all $1\le j\le M$, and $m$ scans all integers $1\le m\le p_j^{\tilde{k}_j}$ such that $\gcd(m,p_j)$ = 1. 
\end{enumerate}

\section{Partial rotation of a cylinder in a twisted sector}
Here we evaluate the partial rotation in the case where the state lies in the non-trivial topological sector labeled by an anyon $a$ of the topological order. Denoting the ground state as $\ket{\Psi_a}$, the partial rotation is again expressed as the Virasoro characters
\begin{align}
\begin{split}
    \mathcal{T}_a\left(\frac{2\pi}{n}\right) &= \frac{\mathrm{Tr}_a[e^{i{P}_l\frac{L}{n}}e^{-\frac{\xi_l}{v}H_l}]\mathrm{Tr}_a[e^{i{P}_r\frac{L}{n}}e^{-\frac{\xi_r}{v}H_r}]}{\mathrm{Tr}_a[e^{-\frac{\xi_l}{v}H_l}]\mathrm{Tr}_a[e^{-\frac{\xi_r}{v}H_r}]} \\
    &= \frac{\chi_a(\frac{i\xi_l}{L}+\frac{1}{n}) \chi_a(\frac{i\xi_r}{L}-\frac{1}{n}) }{\chi_a(\frac{i\xi_l}{L})\chi_a(\frac{i\xi_r}{L})}~,
    \label{eq:rotasCFTtwisted}
    \end{split}
\end{align}
where $\mathrm{Tr}_a$ denotes the trace in the twisted Hilbert space labeled by $a$, and $\chi_a(\tau)$ is the CFT character for the chiral primary $a$. The above quantity can be evaluated in the same logic as the main text; in our setup where $L \ll \xi_l$, the CFT characters for the left edge are approximated as
\begin{align}
\begin{split}
    \chi_a\left(\frac{i\xi_l}{L}\right) &\approx e^{-\frac{2\pi \xi_l}{L}(h_a-\frac{c_-}{24})}, \\ \chi_a\left(\frac{i\xi_l}{L}+\frac{1}{n}\right) &\approx e^{-\frac{2\pi \xi_l}{L}(h_a-\frac{c_-}{24})} e^{\frac{2\pi i}{n}(h_a-\frac{c_-}{24})}.
    \end{split}
    \label{eq:approxleftedge_twisted}
    \end{align}
Meanwhile, the characters for the right edge can be approximately computed by performing modular transformations as
\begin{align}
    \begin{split}
        \chi_a\left(\frac{i\xi_r}{L}\right) &= \sum_b S_{ab} \chi_b\left(\frac{iL}{\xi_r}\right) \approx \frac{d_a}{\mathcal{D}}e^{\frac{2\pi L}{\xi_r}\frac{c_-}{24}},
    \end{split}
\end{align}
\begin{align}
    \begin{split}
        \chi_a\left(\frac{i\xi_r}{L}-\frac{1}{n}\right) &= \sum_b (ST^n S)_{ab} \chi_b\left(\frac{iL}{n^2\xi_r}+\frac{1}{n}\right) \\
        &\approx (ST^n S)_{a,1}e^{-\frac{2\pi i}{n}\frac{c_-}{24}} e^{\frac{2\pi L}{n^2\xi_r}\frac{c_-}{24}} \\
        &= \frac{1}{\mathcal{D}} e^{-2\pi i(n + \frac{1}{n})\frac{c_-}{24}} e^{\frac{2\pi L}{n^2\xi_r}\frac{c_-}{24}}\sum_{b}S_{ab}d_b\theta_b^n,
        \end{split}
        \label{eq:approxrightcharacter_twisted}
\end{align}
where we assumed $n$ to be small, $n^2\ll L/\xi_r$, and used the approximation of the CFT characters for the right edge as
\begin{align}
\begin{split}
\chi_b\left(\frac{iL}{\xi_r}\right) &\approx e^{-\frac{2\pi L}{\xi_r}(h_b-\frac{c_-}{24})} \\
\chi_b\left(\frac{iL}{n^2\xi_r}+\frac{1}{n}\right) &\approx e^{\frac{2\pi i}{n}(h_b-\frac{c_-}{24})} e^{-\frac{2\pi L}{n^2\xi_r}(h_b-\frac{c_-}{24})} \\
    \end{split}
    \label{eq:characterapproxright}
\end{align}
Combining the above results, we obtain
\begin{align} 
\mathcal{T}_a\left(\frac{2\pi}{n}\right) \propto e^{\frac{2\pi i}{n} h_a-2\pi i (\frac{2}{n}+n)\frac{c_-}{24}} \times \sum_{b}S_{ab} d_b\theta_b^n~.
\label{eq:fullresult_twisted}
\end{align}


\section{Interpolating between higher central charge and momentum polarization}
In the main text, we have seen that the partial rotation $\mathcal{T}_1({2\pi}/{n})$ is proportional to higher central charge for a small positive integer $n$, while when $n\to\infty$ it rather gives the momentum polarization $\lambda$ in Eq.~\eqref{eq:momentumpolarization}. Here, we describe how these two different behaviors are interpolated by changing the value of $n$ from a small integer to infinity. 
Following Eq.~\eqref{eq:rotasCFTtwisted} and Eq.~\eqref{eq:approxleftedge_twisted}, we express the partial rotation as
\begin{align}
\begin{split}
    \mathcal{T}_a\left(\frac{2\pi}{n}\right) &=e^{\frac{2\pi i}{n}(h_a-\frac{c_-}{24})}\times\frac{\chi_a(\frac{i\xi_r}{L}-\frac{1}{n}) }{\chi_a(\frac{i\xi_r}{L})}. 
    \end{split}
\end{align}
The CFT characters on the right edge is approximated depending on the scale of $n$ compared with $L/\xi_r$. Below we compute it in cases of the value of $n$.
\begin{itemize}
    \item $n^2\ll L/\xi_r$. The approximation in Eq.~\eqref{eq:approxrightcharacter_twisted} is valid for such $n$ and we have 
    \begin{align} 
\mathcal{T}_a\left(\frac{2\pi}{n}\right) \propto e^{\frac{2\pi i}{n} h_a-2\pi i (\frac{2}{n}+n)\frac{c_-}{24}} \times \sum_{b}S_{ab} d_b\theta_b^n~.
\end{align}
\item $L/\xi_r\ll n^2$.
When $n$ satisfies $L/\xi_r \ll n^2$, Eq.~\eqref{eq:approxrightcharacter_twisted} is no longer valid, where the alternative approximation is available
\begin{align}
    \begin{split}
        \chi_a\left(\frac{i\xi_r}{L}-\frac{1}{n}\right) &= \sum_b S_{ab} \chi_b\left(-\frac{1}{\frac{i\xi_r}{L}-\frac{1}{n}}\right) \\
        &\approx \sum_b S_{ab}\exp\left({-2\pi i \left(h_b-\frac{c_-}{24}\right)\frac{1}{\frac{i\xi_r}{L}-\frac{1}{n}}}\right) \\
        & \approx \frac{d_a}{\mathcal{D}}\exp\left({2\pi i \frac{c_-}{24}\frac{1}{\frac{i\xi_r}{L}-\frac{1}{n}}}\right)~.       \end{split}
\label{eq:approxrightcharacter_twisted_largen}
\end{align}
The partial rotation is then given by
\begin{align}
\begin{split}
    \mathcal{T}_a\left(\frac{2\pi}{n}\right) =& e^{\frac{2\pi i}{n}(h_a-\frac{c_-}{24})}\exp\left({2\pi i \frac{c_-}{24}\frac{1}{\frac{i\xi_r}{L}-\frac{1}{n}}}-\frac{2\pi L}{\xi_r}\frac{c_-}{24}\right) \\
    =& \exp\left(\frac{2\pi i}{n}(h_a-\frac{c_-}{24})-2\pi i\frac{c_-}{24} \frac{\frac{1}{n}}{\frac{\xi_r^2}{L^2}+ \frac{1}{n^2}} \right) \\
    &\times \exp\left(2\pi \frac{c_-}{24} \frac{\frac{\xi_r}{L}}{\frac{\xi_r^2}{L^2}+\frac{1}{n^2}}-2\pi\frac{c_-}{24}\frac{L}{\xi_r} \right)~. \end{split}
\end{align}
By taking the limit $n\to \infty$, it reproduces the expression of momentum polarization in Eq.~\eqref{eq:momentumpolarization}.

\end{itemize}

\section{Relationship between Entanglement spectrum and edge theory in non-chiral topological phases}
Here, we describe the correspondence between entanglement spectrum and physical edge CFT in the (2+1)D bosonic topological phases, following the cut-and-glue approach introduced in Ref.~\cite{Qi2012entanglement}. While the correspondence is discussed for chiral topological phases whose edge CFT is holomorphic in Ref.~\cite{Qi2012entanglement}, we slightly generalize their argument to the case of possibly non-chiral topological phases, verifying the cut-and-glue approach for generic non-chiral phases which are of our interest.

We consider a (possibly non-chiral) bosonic topological ordered state in (2+1)D. Let us take bipartition of the state into subsystems A and B.
The physical Hamiltonian can be described as $H_\mathrm{A} + H_{\mathrm{B}} + \lambda H_{\mathrm{AB}}$, where $H_{\mathrm{AB}}$ denotes the term that couples the subsystem on A and B subregions. 
We want to study the reduced density matrix $\rho_{\mathrm{A}}$ for the A subsystem, where the entanglement Hamiltonian is given by $\rho_{\mathrm{A}} = e^{-H_E}$. We will see that $\rho_{\mathrm{A}}$ can be described by the thermal density matrix of the edge CFT at high temperature.

If the coupling $\lambda$ is taken much smaller than the bulk energy gap, the main effect of the term $H_{\mathrm{AB}}$ is to induce a perturbation to the edge theories on the cut. Let us assume that the entanglement property of the system can be well-described by taking the small coupling $\lambda$, which is a natural assumption when $H_{\mathrm{AB}}$ induces a relevant perturbation to CFT that can gap out the edge spectrum with small coupling.
In that case, the entanglement spectrum can be effectively described by the edge theories on the cut,
\begin{align}
    H= H_{\mathrm{A}}^{\mathrm{CFT}} + H_{\mathrm{B}}^{\mathrm{CFT}} + H_{\mathrm{int}},
\end{align}
where $H_{\mathrm{A}}^{\mathrm{CFT}}, H_{\mathrm{B}}^{\mathrm{CFT}}$ describe the edge CFT at the cut, and $H_{\mathrm{int}}$ introduces the inter-edge perturbation. We consider the case where $H_{\mathrm{A}}^{\mathrm{CFT}}, H_{\mathrm{B}}^{\mathrm{CFT}}$ are not necessarily chiral.
To study the ground state of this perturbed Hamiltonian $H$, we recall that the relevant perturbation $H_{\mathrm{int}}$ corresponds to a conformally invariant boundary condition of the CFT $H_{\mathrm{A}}^{\mathrm{CFT}} + H_{\mathrm{B}}^{\mathrm{CFT}}$, where the boundary condition can be explicitly obtained by turning on the perturbation on the half region of the 1d space. This perspective allows us to describe the ground state of the perturbed Hamiltonian $H$ using the toolkit of boundary conformal field theory (BCFT).

To do this, let us consider a ``quenching'' process of the system, whose dynamics is given by the Hamiltonian $H$ at the time $t<0$, and we suddenly turn off the perturbation as $\lambda=0$ at $t=0$. In the (1+1)D spacetime, this quench is regarded as having the boundary condition of the unperturbed CFT at $t=0$, so the initial state at $t=0$ can be effectively described by the boundary state for the boundary condition that corresponds to $H_{\mathrm{int}}$. 

Let us now describe this boundary state. Suppose that the state of $H_{\mathrm{A}}^{\mathrm{CFT}}$ lies in the topological sector of the CFT labeled by the anyon $a$ in the bulk.  Since we want to work on a non-chiral topological phase in general, we assume that the anyon $a$ in the bulk corresponds to a generic non-chiral primary field $(\alpha,\overline{\beta})$ on the edge CFT, where $\alpha$ (resp.~$\overline{\beta}$) labels chiral (resp.~anti-chiral) primary. Accordingly, the state of $H_{\mathrm{B}}^{\mathrm{CFT}}$ lies in the topological sector $(\beta,\overline{\alpha})$. The state of the combined theory $H_{\mathrm{A}}^{\mathrm{CFT}} + H_{\mathrm{B}}^{\mathrm{CFT}}$ is defined in the Hilbert space $(\mathcal{H}_{\alpha}\otimes\mathcal{H}_{\beta})\otimes(\mathcal{H}_{\overline{\alpha}}\otimes\mathcal{H}_{\overline{\beta}})$, where $\mathcal{H}_{\alpha}$ denotes the holomorphic CFT Hilbert space labeled by the chiral primary $\alpha$. Note that $\mathcal{H}_{\alpha}\otimes \mathcal{H}_{\overline{\beta}}$ (resp.~$\mathcal{H}_{\beta}\otimes \mathcal{H}_{\overline{\alpha}}$) corresponds to the Hilbert space supported on the A (resp.~B) subsystem. In this Hilbert space, the boundary state of the combined theory $H_{\mathrm{A}}^{\mathrm{CFT}} + H_{\mathrm{B}}^{\mathrm{CFT}}$ can be described by the Ishibashi state
\begin{align}
    \ket{a_{\mathrm{bdry}}} = \sum_N \ket{(\alpha,\beta),N}\otimes \ket{\overline{(\alpha,\beta)},N}
\end{align}
where $\ket{(\alpha,\beta),N}$ denotes the state in the holomorphic Hilbert space $\mathcal{H}_{\alpha}\otimes\mathcal{H}_{\beta}$ with the label $N$, and the sum of $N$ runs all the orthonormal basis of $\mathcal{H}_{\alpha}\otimes\mathcal{H}_{\beta}$. 
Since each basis state can be written as $\ket{V_\alpha}\otimes\ket{V_\beta}$, the sum is rewritten as
\begin{align}
    \ket{a_{\mathrm{bdry}}} = \sum_{N_\alpha, N_\beta} \ket{\alpha,N_\alpha}\ket{\beta,N_{\beta}} \otimes \ket{\overline{\alpha},N_\alpha}\ket{\overline{\beta},N_{\beta}}    \end{align}
using the label $N_\alpha, N_\beta$ for the orthonormal basis of $\mathcal{H}_\alpha, \mathcal{H}_\beta$ respectively.
To obtain the reduced density matrix $\rho_\mathrm{A}$, we trace out the Hilbert space $\mathcal{H}_{\beta}\otimes \mathcal{H}_{\overline{\alpha}}$ for the B subsystem. We then have
\begin{align}
    \rho_{\mathrm{A}} = \sum_{N_\alpha, N_\beta} [\ket{\alpha,N_\alpha}\ket{\overline{\beta},N_\beta}] [\bra{\alpha,N_\alpha}\bra{\overline{\beta},N_\beta}]
\end{align}
which is regarded as the thermal density matrix of CFT for the A subsystem at infinite temperature. This demonstrates that the entanglement spectrum is described by that of the physical edge CFT at high temperature, including the non-chiral topological phases.

\section{Preparing the trivial sector of Kitaev honeycomb model on a cylinder}

To compute higher central charge of the topologically ordered state, one needs to prepare a low-energy state on the cylinder that lies in the trivial sector labeled by the trivial anyon $1$. In this section, we describe how to obtain such a state for the non-Abelian gapped phase of the Kitaev honeycomb model on a cylinder.

The Kitaev honeycomb model is described in terms of a free fermion model coupled to a dynamical $\Z_2$ gauge field. The $\Z_2$ vortex of the gauge field is regarded as a single quasiparticle excitation $\sigma$, and the low-energy state is realized by flat $\Z_2$ gauge field without a vortex on each plaquette.
The distinct low-energy states on a cylinder then correspond to the different configurations of flat $\Z_2$ gauge field labeled by the boundary conditions of the free fermion model. In particular, the anti-periodic boundary condition (APBC) in the $y$ direction (i.e., circumference) of the cylinder corresponds to 1 or $\psi$ sector, while the periodic (PBC) corresponds to the $\sigma$ sector. To further obtain the trivial sector $1$ instead of $\psi$, we need to carefully study the holonomy of $\Z_2$ gauge field in $x$ direction.
As we will see below, fixing the holonomy of the $\Z_2$ gauge field in $x$ direction gives the state labeled by a non-simple anyon $1+\psi$. 
To obtain a state in the trivial sector $1$, it turns out that we need to take a superposition over the two distinct states that correspond to different $\Z_2$ holonomy in $x$ direction.

Let us move to the explicit description of the states of the Kitaev honeycomb model.
We assume that the system size $L_y$ in the $y$ direction is even.
Recall that to solve this model, we introduce four Majorana fermion operators $b^x_j, b^y_j, b^z_j, c_j$ on each site $j$, and the model then becomes equivalent to the free fermion model for the Majorana fermions $c$, coupled to the dynamical $\Z_2$ gauge field $u_{ij}:=ib_ib_j$~\cite{Kitaevanyons}.
Let us fix one configuration of flat $\Z_2$ gauge field of the Kitaev honeycomb model $\{u_{ij}\}$ that corresponds to the APBC in the $y$ direction.
For this given configuration of the $\Z_2$ gauge field $\{u_{ij}\}$, the ground state of the Kitaev honeycomb model is given in the form of
\begin{align}
    \ket{\Psi} = P\left[\ket{\Psi_F(\{u_{ij}\})}\otimes \ket{\{u_{ij}\}}\right],
\end{align}
where the state is given by the tensor product of the state of the free fermion 
$\ket{\Psi_F(\{u_{ij}\})}$ and the state $\ket{\{u_{ij}\}}$ for the dynamical $\Z_2$ field. $P:= \prod_j \frac{1+D_j}{2}$ denotes a projection onto the gauge invariant Hilbert space, where $D_j:= ib^x_jb^y_jb^z_jc$ generates the gauge transformation at the vertex $j$.

The distinct state with different $\Z_2$ holonomy in the $x$ direction can be obtained by considering a straight line $l$ extended in the $y$ direction cutting the links, and then shifting the $\Z_2$ gauge field $\{u_{ij}\}$ along the links cut by the line $l$. Let us denote the shifted $\Z_2$ gauge field as $\{u'_{ij}\}$, and the corresponding state of the Kitaev honeycomb model as
\begin{align}
    \ket{\Psi'} = P\left[\ket{\Psi_F(\{u'_{ij}\})}\otimes \ket{\{u'_{ij}\}}\right].
\end{align}
We aim to show that the two states $\ket{\Psi} \pm \ket{\Psi'}$ correspond to the states in the sector 1 and $\psi$, though we do not try to specify which of $\ket{\Psi} \pm \ket{\Psi'}$ corresponds to the trivial sector. To see this, we evaluate the partial rotation for the states $\ket{\Psi} \pm \ket{\Psi'}$, since the partial rotation can diagnose the distinct topological sector.

To perform the partial rotation, we need to take the bipartition of the cylinder into A and B subsystems. For convenience, let us perform the bipartition along the line $l$ that we used for shifting the $\Z_2$ gauge field from $\{u_{ij}\}$ to $\{u'_{ij}\}$. 
Without loss of generality, we can take the gauge where $u_{ij} = 1$ on each link cut by the line $l$, while $u'_{ij}=-1$ on these links.

Since the link variables $u_{ij}$, $u'_{ij}$ cut by the line $l$ overlap
the Hilbert space of A and B subsystems, it is convenient to redefine link variables near the cut so that the newly defined ones lie solely in the A or B subsystem. This can be done by following the argument in Appendix A of Ref.~\cite{Qi2012momentumpolarization}. We recall that the link variable is given by the doublet of Majorana fermions as $u_{ij} = ib_{i}b_j$, and let us write the Majorana fermions at the plaquette on the cut as $b_1,b_2,b_3,b_4$ as shown in Fig.~\ref{fig:wij}. The newly defined link variables are then given by $w_{13}:= ib_1b_3, w_{42} = ib_4b_2$. The local Hilbert space spanned by these four Majorana fermions is identified as the Hilbert space of two qubits, where we have the identification of operators as
\begin{align}
    u_{12} = Z_1,\quad u_{34} = Z_2, \quad w_{13} = X_1X_2, \quad w_{42} = -Y_1Y_2. 
\end{align}
Some of the eigenstates of $w_{13}, w_{42}$ are expressed as
\begin{align}
    \ket{++} = \frac{1}{\sqrt{2}} (\ket{00} + \ket{11}), \quad \ket{--} = \frac{1}{\sqrt{2}} (\ket{00} - \ket{11})
    \end{align}
where $+, -$ denotes eigenvalues of $w_{13}, w_{42}$, while $0,1$ denotes those of $u_{12},u_{34}$. 

\begin{figure}[htbp]
    \centering
    \includegraphics[width = 0.23\textwidth]{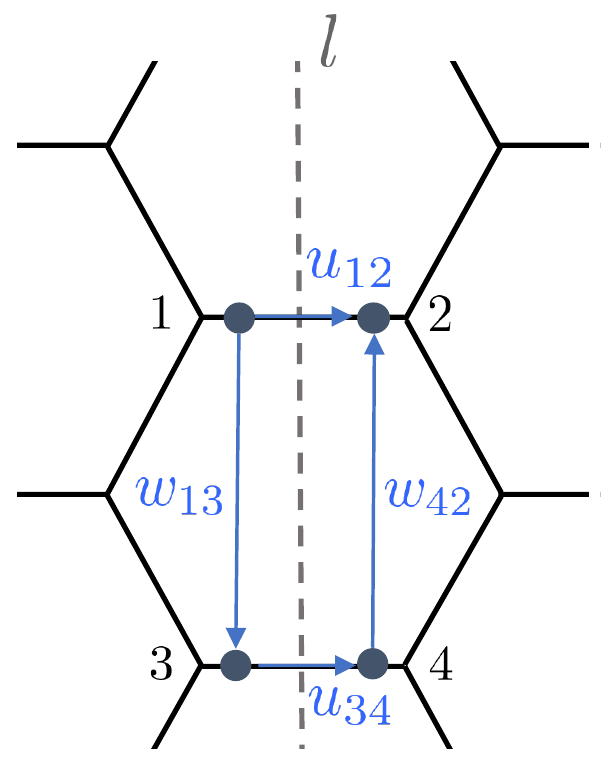}
    \caption{The link variables near the bipartition. The thick dots denote the Majorana fermions $b_j$ for $1\le j\le 4$. The link variables $u_{12}=ib_1b_2,u_{34}=ib_3b_4$ are cut by the line $l$, and we define the new ones $w_{13}=ib_1b_3, w_{42}=ib_4b_2$. }
    \label{fig:wij}    
\end{figure}

Noting that $u_{ij}=1$ on the edges cut by the bipartition, the state for the $\Z_2$ gauge field $\ket{\{u_{ij}\}}$ is decomposed into the Hilbert spaces of A, B subsystems as
\begin{align}
    \ket{\{u_{ij}\}} = 2^{-\frac{L_y}{4}}\sum_{w_{ij} = \pm 1} \ket{\{u^{\mathrm{A}}_{ij}, w_{ij}\}}_{\mathrm{A}}\otimes \ket{\{u^{\mathrm{B}}_{ij}, w_{ij}\}}_{\mathrm{B}}~.
\end{align}
Meanwhile, since $u'_{ij}=-1$ on the edges cut by the bipartition, the state for the shifted $\Z_2$ gauge field $\ket{\{u'_{ij}\}}$ is decomposed as
\begin{align}
    \ket{\{u'_{ij}\}} = 2^{-\frac{L_y}{4}}\sum_{w_{ij} = \pm 1} (-1)^{|w|} \ket{\{u^{\mathrm{A}}_{ij}, w_{ij}\}}_{\mathrm{A}}\otimes \ket{\{u^{\mathrm{B}}_{ij}, w_{ij}\}}_{\mathrm{B}}~,
\end{align}
where $|w|$ is the mod 2 number of links on the A subsystem with $w_{ij}=-1$. Note that $u_{ij}=u'_{ij}$ away from the cut, so each Schmidt state in the A, B subsystems are the same in the expression of $\ket{\{u_{ij}\}}$ and $\ket{\{u'_{ij}\}}$.

Let us also write the Schmidt decomposition of the free fermion state $\ket{\Psi_F(\{u_{ij}\})}$ as
\begin{align}
    \ket{\Psi_F(\{u_{ij}\})} = \sum_N \alpha_N \ket{\Psi^{\mathrm{A}}_{N}(\{u^{\mathrm{A}}_{ij}\})} \otimes \ket{\Psi^{\mathrm{B}}_{N}(\{u^{\mathrm{B}}_{ij}\})}~.
\end{align}
Next, we assume that the low-energy state for the free fermion $\ket{\Psi_F(\{u'_{ij}\})}$ under the shifted $\Z_2$ gauge field $\{u'_{ij}\}$ can be chosen to be identical to that under the trivial $\Z_2$ gauge field $\{u_{ij}\}$,
\begin{align}
    \ket{\Psi_F(\{u'_{ij}\})} = \ket{\Psi_F(\{u_{ij}\})},
\end{align}
and hence has the same Schmidt decomposition. 
Below, we show that $\ket{\Psi}\pm\ket{\Psi'}$ with the above $\ket{\Psi'}$ corresponds to the 1 or $\psi$ sector.

The reduced density matrix of $\ket{\Psi}$ for the A subsystem is given by~\cite{Qi2012momentumpolarization}
\begin{align}
    \rho_{\mathrm{A}} := \mathrm{Tr}_{\mathrm{B}}(\ket{\Psi}\bra{\Psi}) \propto \sum_{w_{ij} = \pm 1} \sum_N |\alpha_N|^2 P_{\mathrm{A}}\left[\ket{\Psi_N^{\mathrm{A}}(\{u^{\mathrm{A}}_{ij}\})}\ket{\{u^{\mathrm{A}}_{ij},w_{ij}\}}\right] P_{\mathrm{A}}\left[\bra{\Psi_N^{\mathrm{A}}(\{u^{\mathrm{A}}_{ij}\})}\bra{\{u^{\mathrm{A}}_{ij},w_{ij}\}}\right],
\end{align}
where $P_{\mathrm{A}} = \prod_{j\in\mathrm{A}}\frac{1+D_j}{2}$ denotes the projector supported on the A subsystem. Noting that the product of gauge transformations in the A subsystem $\prod_{j\in\mathrm{A}} D_j$ leaves the link variables $\{u^{\mathrm{A}}_{ij}, w_{ij}\}$ invariant, one can rewrite the expression of the state as
\begin{align}
    P_{\mathrm{A}}\left[\ket{\Psi_N^{\mathrm{A}}(\{u^{\mathrm{A}}_{ij}\})}\ket{\{u^{\mathrm{A}}_{ij},w_{ij}\}}\right] \propto \sum_{\{u'^{\mathrm{A}}_{ij}, w'_{ij}\}\approx\{u^{\mathrm{A}}_{ij}, w_{ij}\}} \frac{1+\prod_{j\in\mathrm{A}}D_j}{2} \ket{\Psi_N^{\mathrm{A}}(\{u'^{\mathrm{A}}_{ij}\})}\ket{\{u'^{\mathrm{A}}_{ij},w'_{ij}\}}
\end{align}
where the sum is taken over gauge fields $\{u'^{\mathrm{A}}_{ij},w'_{ij}\}$ that are equivalent to $\{u^{\mathrm{A}}_{ij},w_{ij}\}$ up to gauge transformations. The reduced density matrix is then expressed as
\begin{align}
    \rho_{\mathrm{A},\ket{\Psi}}\propto \sum_{w_{ij} = \pm 1}\sum_{\substack{\{u'^{\mathrm{A}}_{ij}, w'_{ij}\}\approx\{u^{\mathrm{A}}_{ij},w_{ij}\} \\\{u''^{\mathrm{A}}_{ij}, w''_{ij}\}\approx\{u^{\mathrm{A}}_{ij},w_{ij}\} }  } \sum_N |\alpha_N|^2\frac{1+\prod_{j\in\mathrm{A}}D_j}{2} \ket{\Psi_N^{\mathrm{A}}(\{u'^{\mathrm{A}}_{ij}\})}\ket{\{u'^{\mathrm{A}}_{ij},w'_{ij}\}}\bra{\Psi_N^{\mathrm{A}}(\{u''^{\mathrm{A}}_{ij}\})}\bra{\{u''^{\mathrm{A}}_{ij},w''_{ij}\}}
    \frac{1+\prod_{j\in\mathrm{A}}D_j}{2} 
    \end{align}
Also, the reduced density matrix for the state $\ket{\Psi}\pm\ket{\Psi'}$ is given by
\begin{align}
\begin{split}
    \rho_{\mathrm{A},\ket{\Psi}\pm\ket{\Psi'}}\propto &\sum_{w_{ij} = \pm 1} \frac{1\pm(-1)^{|w|}}{2}\sum_{\substack{\{u'^{\mathrm{A}}_{ij}, w'_{ij}\}\approx\{u^{\mathrm{A}}_{ij},w_{ij}\} \\\{u''^{\mathrm{A}}_{ij}, w''_{ij}\}\approx\{u^{\mathrm{A}}_{ij},w_{ij}\} }  } \sum_N\\
    &|\alpha_N|^2\frac{1+\prod_{j\in\mathrm{A}}D_j}{2} \ket{\Psi_N^{\mathrm{A}}(\{u'^{\mathrm{A}}_{ij}\})}\ket{\{u'^{\mathrm{A}}_{ij},w'_{ij}\}}\bra{\Psi_N^{\mathrm{A}}(\{u''^{\mathrm{A}}_{ij}\})}\bra{\{u''^{\mathrm{A}}_{ij},w''_{ij}\}}
    \frac{1+\prod_{j\in\mathrm{A}}D_j}{2}.
    \end{split}
    \end{align}
The partial rotation of these reduced density matrices for $\ket{\Psi},\ket{\Psi}\pm\ket{\Psi'}$ are evaluated as
\begin{align}
    \begin{split}
        \mathcal{T}_{\ket{\Psi}}\left(\frac{2\pi}{n}\right) &\propto\sum_{w_{ij}=\pm 1}\sum_{N}|\alpha_N|^2\bra{\Psi_N^{\mathrm{A}}(\{T_{\frac{L_y}{n}}u^{\mathrm{A}}_{ij}\})}\frac{1+\prod_{j\in\mathrm{A}}D_j}{2} T_{\mathrm{A};\frac{L_y}{n}}\ket{\Psi_N^{\mathrm{A}}(\{u^{\mathrm{A}}_{ij}\})}~, \\
        \mathcal{T}_{\ket{\Psi}\pm\ket{\Psi'}}\left(\frac{2\pi}{n}\right) &\propto \sum_{w_{ij}=\pm 1}\sum_{N}|\alpha_N|^2\bra{\Psi_N^{\mathrm{A}}(\{T_{\frac{L_y}{n}}u^{\mathrm{A}}_{ij}\})}\frac{1\pm(-1)^{|w|}}{2}\frac{1+\prod_{j\in\mathrm{A}}D_j}{2} T_{\mathrm{A};\frac{L_y}{n}}\ket{\Psi_N^{\mathrm{A}}(\{u^{\mathrm{A}}_{ij}\})}~,
    \end{split}
\end{align}
where we assume $L_y$ is the integer multiple of $n$, and $T_{\mathrm{A};\frac{L_y}{n}}$ is the translation along the $y$ direction by $L_y/n$ lattice units. $\{T_{\frac{L_y}{n}}u^{\mathrm{A}}_{ij}\}$ denotes the gauge field obtained by acting the translation $T_{\mathrm{A};\frac{L_y}{n}}$ on the configuration $\{u_{ij}\}$. The above expressions for the partial rotation can further be simplified by writing the operator $\prod_{j\in\mathrm{A}}D_j$ as
\begin{align}
    \prod_{j\in\mathrm{A}}D_j = (-1)^F(-1)^{|w|}\prod u^{\mathrm{A}}_{ij},
\end{align}
where $(-1)^F$ is the fermion parity operator of the A subsystem acting only on the free fermion state. 
The product over link variables are dependent on the orientations of the links, but its detailed definition is not discussed here. Since the projector $\frac{1+\prod_{j\in\mathrm{A}}D_j}{2}$ enforces the relation $(-1)^F = (-1)^{|w|} \prod u^{\mathrm{A}}_{ij}$ and we sum over all possible choices of $w_{ij}=\pm 1$, the half of the projectors in the summand of $w_{ij}$ becomes $\frac{1+(-1)^F}{2}$, while the other half becomes $\frac{1-(-1)^F}{2}$. Hence we have
\begin{align}
    \sum_{w_{ij}=\pm 1}\frac{1+\prod_{j\in\mathrm{A}}D_j}{2}\propto \frac{1+(-1)^F}{2} + \frac{1-(-1)^F}{2} = 1.
    \end{align}
Meanwhile, for the operator $\frac{1\pm(-1)^{|w|}}{2}\frac{1+\prod_{j\in\mathrm{A}}D_j}{2}$, the contribution of either $\frac{1+(-1)^F}{2}$ or $\frac{1+(-1)^F}{2}$ is suppressed by the presence of the $\frac{1\pm(-1)^{|w|}}{2}$ factor.
We hence have
\begin{align}
    \sum_{w_{ij}=\pm 1}\frac{1\pm(-1)^{|w|}}{2}\frac{1+\prod_{j\in\mathrm{A}}D_j}{2}\propto\quad  \frac{1+(-1)^F}{2}  \quad \text{or} \quad \frac{1-(-1)^F}{2}.
    \end{align}
Eventually, one can express the partial rotation of each state in the form only involving the free fermion state as
\begin{align}
    \begin{split}
        \mathcal{T}_{\ket{\Psi}}\left(\frac{2\pi}{n}\right) &\propto\sum_{N}|\alpha_N|^2\bra{\Psi_N^{\mathrm{A}}(\{T_{\frac{L_y}{n}}u^{\mathrm{A}}_{ij}\})} T_{\mathrm{A};\frac{L_y}{n}}\ket{\Psi_N^{\mathrm{A}}(\{u^{\mathrm{A}}_{ij}\})}~,
  \end{split}  
  \end{align}

  \begin{align}
           \mathcal{T}_{\ket{\Psi}\pm\ket{\Psi'}}\left(\frac{2\pi}{n}\right) \propto 
           \begin{cases}
           &\sum_{N}|\alpha_N|^2\bra{\Psi_N^{\mathrm{A}}(\{T_{\frac{L_y}{n}}u^{\mathrm{A}}_{ij}\})}\frac{1+(-1)^F}{2}T_{\mathrm{A};\frac{L_y}{n}}\ket{\Psi_N^{\mathrm{A}}(\{u^{\mathrm{A}}_{ij}\})}~, \\
           &\quad \text{or} \\
           & \sum_{N}|\alpha_N|^2\bra{\Psi_N^{\mathrm{A}}(\{T_{\frac{L_y}{n}}u^{\mathrm{A}}_{ij}\})}\frac{1-(-1)^F}{2}T_{\mathrm{A};\frac{L_y}{n}}\ket{\Psi_N^{\mathrm{A}}(\{u^{\mathrm{A}}_{ij}\})}~.
           \end{cases}
  \end{align}
  Let us write the entanglement Hamiltonian of the free fermion state $\ket{\Psi_F(\{u^{\mathrm{A}}_{ij}\})}$ as $H_E$. The above partial rotations are then simply expressed as
  \begin{align}
      \mathcal{T}_{\ket{\Psi}}\left(\frac{2\pi}{n}\right) \propto \mathrm{Tr}\left(e^{-H_E}T_{\mathrm{A};\frac{L_y}{n}}\right),
\end{align}
\begin{align}
    \mathcal{T}_{\ket{\Psi}\pm\ket{\Psi'}}\left(\frac{2\pi}{n}\right) \propto \quad \mathrm{Tr}\left(\frac{1+(-1)^F}{2}e^{-H_E}T_{\mathrm{A};\frac{L_y}{n}}\right)\quad \text{or} \quad \mathrm{Tr}\left(\frac{1-(-1)^F}{2}e^{-H_E}T_{\mathrm{A};\frac{L_y}{n}}\right)~.
        \end{align}
At the CFT level, the projector $(1+(-1)^F)/2$ (resp.~$(1-(-1)^F)/2$) corresponds to the projector onto the fermion parity even (resp.~odd) Hilbert space, which corresponds to the sector 1 (resp.~$\psi$). We hence have
\begin{align}
        \mathcal{T}_1\left(\frac{2\pi}{n}\right)\propto\mathrm{Tr}\left(\frac{1+(-1)^F}{2}e^{-H_E}T_{\mathrm{A};\frac{L_y}{n}}\right), \quad \mathcal{T}_\psi\left(\frac{2\pi}{n}\right)\propto\mathrm{Tr}\left(\frac{1-(-1)^F}{2}e^{-H_E}T_{\mathrm{A};\frac{L_y}{n}}\right). 
\end{align}
In the main text, we directly confirm the first relation for $\mathcal{T}_1\left(\frac{2\pi}{n}\right)$ by comparing the CFT results for LHS and the numerical results for RHS.


\section{Extracting higher central charge of the Kitaev honeycomb model}

\begin{figure}[htbp]
    \centering
     \includegraphics[width = 0.48\textwidth]{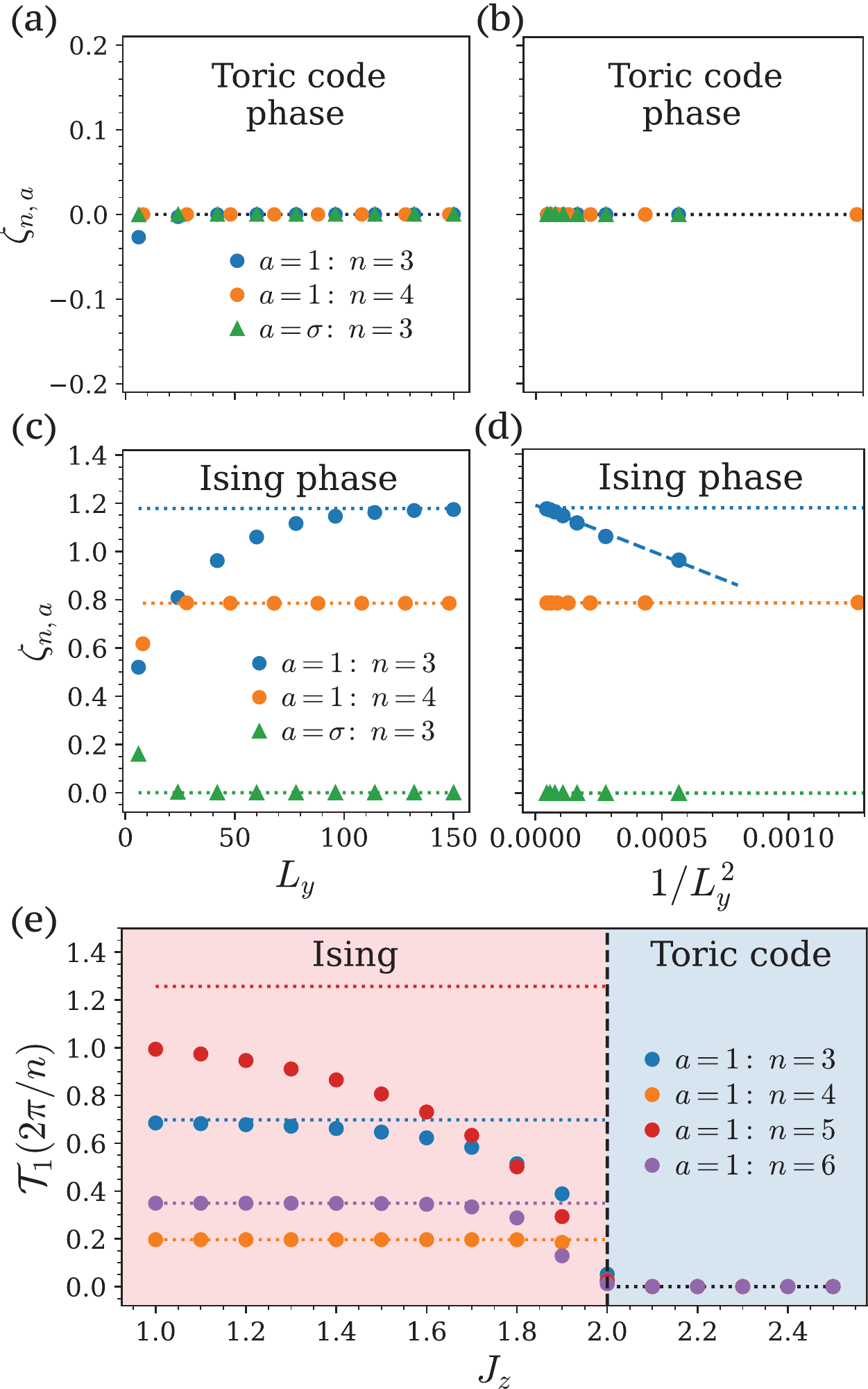}
    \caption{(a) Higher central charge $\zeta_{n}$ of the toric code phase of the Kitaev model, extracted from $\mathcal{T}_a(2\pi/n)$ in Fig.~\ref{fig:kitaev_main_text} (b) and chiral central charge $c_- = 0$ using the main result Eq.~\eqref{eq:mainresult}, computed using $J_x = 2.5, J_y = J_z = 1, \kappa = 0$. The dotted lines are the CFT predictions as usual.
    (b) Extrapolating $\zeta_{n}$ as a function of $1/L_y^2$
    (c) Twisted higher central charge $\zeta_{n,a}$ of the Ising phase of the Kitaev model, extracted using chiral central charge $c_- = 1/2$. The computation is done with $J_x = J_y = J_z = 1, \kappa = 0.1$.
    (d) Extrapolating $\zeta_{n,a}$ as a function of $1/L_y^2$.
    (e) Phase of partial rotation $\mathcal{T}_1(2\pi /n )$ as it goes through a phase transition from the Ising phase to the toric code phase. $J_z = 2$ line marks the phase transition, and the horizontal dashed (dotted) lines are CFT predictions for Ising (toric code) phase. The computation is done with $J_x = J_y = 1$, $\kappa = 0.1$, $L_x = 100, L_y = 120$.
    }
    \label{fig:Kitaev_zeta}    
\end{figure}

The Kitaev honeycomb model supports two types of topological orders~\cite{Kitaevanyons}: the toric code phase with $c_- = 0$ and the Ising topological order with $c_-=1/2$. In this appendix, we show the higher central charges of these two phases, and see how they interpolate at finite $L_y$.

The toric code phase is obtained when one of $J_x, J_y, J_z$ is much larger than other e.g. $|J_x| > |J_y| + |J_z|$. Perturbation theory calculation confirms the effective theory for this corresponds to Wen's plaquette model~\cite{wen_plaquette}.
Let us now consider the phase with $|J_z| > |J_y| + |J_x|$. With the zigzag boundary condition given in Fig.~\ref{fig:kitaev_main_text}, this corresponds to a edge critical theory between the $m$ and $e$ type boundaries. However, since all topological sectors of toric code has trivial $\zeta_{n, a}$, this does not affect the extraction of partial rotation. Therefore, we will simply reuse the formula Eq.\eqref{eq:t1_from_majorana_spectrum} and its $\sigma$ phase counterpart to compute partial rotation, even though it no longer corresponds to the trivial or $\sigma$ sector of the Ising topological order.

In Fig.~\ref{fig:Kitaev_zeta}(a)-(d), we show the convergence of higher central charge with circumference $L_y$, both for the toric code phase and the Ising phase. The toric code phase has four Abelian anyons with topological spins $\theta_1 = 1, \theta_m = 1, \theta_e = 1, \theta_f = -1$, and correspondingly the higher central charge is always real and positive. We see $\zeta_n$ converges quickly to their theoretical values. In Fig.~\ref{fig:Kitaev_zeta} (e), we show the partial rotation $\mathcal{T}_a(2\pi / n)$. At this fixed $L_y$, phase angle smoothly interpolates between the Ising phase and the toric code phase.


\section{Higher central charge of fermionic $\nu=2/3$ FQH state: non-chiral example}
In this appendix, we numerically evaluate the partial rotation on a simple example of non-chiral topological order; $\nu=2/3$ fractional quantum Hall (FQH) state. This is a fermionic topological order described by the $\U_{1}\times \U_{-3}$ Chern-Simons theory and carries $c_-=0$. We will see that the numerical results for the partial rotations with $n=2,3,4$ has an excellent agreement with the CFT predictions, supporting the validity of the cut-and-glue approach to evaluate the partial rotation $\mathcal{T}_a(2\pi/n)$ even for non-chiral phases.

\subsection{Higher central charge and partial rotation on Abelian fermionic topological order}
Since we are interested in fermionic FQH states instead of bosonic states studied in the main text,  let us first compute the partial rotation of fermionic topologically ordered states and derive the formula for $\mathcal{T}_a(2\pi/n)$.

For simplicity, let us restrict ourselves to the case of Abelian fermionic topological order. In that case, the anyons are described by a super-modular category $\mathcal{C}$ with the form of
\begin{align}
    \mathcal{C} = \mathcal{C}_0 \boxtimes \{1,\psi\}
\end{align}
where $\mathcal{C}_0$ is a modular tensor category, and the operation $\boxtimes$ is Deligne product that physically means stacking two theories.
We show in the leading order that
\begin{align}
\mathcal{T}_a\left(\frac{2\pi}{n}\right) \propto
    \begin{cases}
        e^{\frac{2\pi i}{n} h_a-2\pi i (\frac{2}{n}+n)c_-} \cdot \sum_{b\in\mathcal{C}_0} S_{ab}d_b\theta_b^n & \text{when $n$ is even}, \\
        e^{\frac{2\pi i}{n} h_a-2\pi i (\frac{2}{n}+n)c_-} \cdot \sum_v\sum_{b\in\mathcal{C}_0} S_{ab}d_b\theta_b^n S_{bv}e^{\frac{2\pi i}{n}h_v} & \text{when $n$ is odd},
    \end{cases}
    \label{eq:fermionpartialrot}
\end{align}
where $a\in\mathcal{C}$ specifies the topological sector, and $v$ is a vortex of $\Z_2^f$ fermion parity symmetry carrying the smallest spin. If there are multiple labels of $\Z_2^f$ vortices with the smallest spin we sum over such vortices in the expression, which is meant by the sum $\sum_v$. Formally, $v$ is a simple object of the modular extension of the super-modular category $\mathcal{C}$ denoted as $\Breve{\mathcal{C}} = \mathcal{C}\oplus \mathcal{C}_v$. $\Breve{\mathcal{C}}$ is physically regarded as a bosonic theory obtained by gauging $\Z_2^f$ symmetry of the fermionic theory $\mathcal{C}$, where $\mathcal{C}_v$ describes the defects carrying the vortex of $\Z_2^f$ symmetry.
The above formula can be obtained by following the logic of the analytic computation described in the main text. Namely, by employing the cut-and-glue approach, we first write the partial rotation in terms of the CFT character as
\begin{align}
\begin{split}
    \mathcal{T}_a\left(\frac{2\pi}{n}\right) &= \frac{\mathrm{Tr}_a[e^{i{P}_l\frac{L}{n}}e^{-\frac{\xi_l}{v}H_l}]\mathrm{Tr}_a[e^{i{P}_r\frac{L}{n}}e^{-\frac{\xi_r}{v}H_r}]}{\mathrm{Tr}_a[e^{-\frac{\xi_l}{v}H_l}]\mathrm{Tr}_a[e^{-\frac{\xi_r}{v}H_r}]} \\
    &= \frac{\chi_a(\frac{i\xi_l}{L}+\frac{1}{n}; [\mathrm{AP, AP}]) \chi_a(\frac{i\xi_r}{L}-\frac{1}{n}; [\mathrm{AP, AP}]) }{\chi_a(\frac{i\xi_l}{L}; [\mathrm{AP, AP}])\chi_a(\frac{i\xi_r}{L}; [\mathrm{AP, AP}])}~,
    \label{eq:rotasfermionicCFT}
    \end{split}
\end{align}
where [AP,AP] denotes the boundary condition of $\Z_2^f$ symmetry on a torus, labeled by anti-periodic (AP) or periodic (P) in each cycle of the torus. Due to $L\ll \xi_l$, the contribution from the left edge can be approximated as
\begin{align}
\begin{split}
    \mathcal{T}_a\left(\frac{2\pi}{n}\right) 
    &= e^{\frac{2\pi i}{n}(h_a-\frac{c_-}{24})} \frac{\chi_a(\frac{i\xi_r}{L}-\frac{1}{n}; [\mathrm{AP, AP}]) }{\chi_a(\frac{i\xi_r}{L}; [\mathrm{AP, AP}])}~.
    \end{split}
\end{align}
The denominator does not contribute to the phase of $\mathcal{T}_a\left(\frac{2\pi}{n}\right)$ since
\begin{align}
    \begin{split}
        \chi_a\left(\frac{i\xi_r}{L}; [\mathrm{AP, AP}]\right) &= \sum_b S_{ab} \chi_b\left(\frac{iL}{\xi_r}; [\mathrm{AP, AP}]\right) \approx \frac{d_a}{\mathcal{D}}e^{\frac{2\pi L}{\xi_r}\frac{c_-}{24}},
    \end{split}
\end{align}
so we have
\begin{align}
\begin{split}
    \mathcal{T}_a\left(\frac{2\pi}{n}\right) 
    &= e^{\frac{2\pi i}{n}(h_a-\frac{c_-}{24})} {\chi_a\left(\frac{i\xi_r}{L}-\frac{1}{n}; [\mathrm{AP, AP}]\right) }~.
    \end{split}
\end{align}
In the following let us derive Eq.~\eqref{eq:fermionpartialrot} by cases of even or odd $n$.
\begin{itemize}
    \item Even $n$. We perform the $ST^nS$ modular transformation on the CFT character, which transforms the spin structure of the torus as $ST^nS:$ [AP,AP]$\to$[AP,AP] for even $n$. We can then approximate the CFT character by the contribution from the vacuum as
    \begin{align}
        \begin{split}
            \chi_a\left(\frac{i\xi_r}{L}-\frac{1}{n}; [\mathrm{AP, AP}]\right)  &= \sum_b (ST^n S)_{ab} \chi_b\left(\frac{iL}{n^2\xi_r}+\frac{1}{n}; [\mathrm{AP, AP}]\right) \\
            &\approx (ST^n S)_{a,1}e^{-\frac{2\pi i}{n}\frac{c_-}{24}} e^{\frac{2\pi L}{n^2\xi_r}\frac{c_-}{24}} \\
        &= \frac{1}{\mathcal{D}} e^{-2\pi i(n + \frac{1}{n})\frac{c_-}{24}} e^{\frac{2\pi L}{n^2\xi_r}\frac{c_-}{24}}\sum_{b\in\mathcal{C}}S_{ab}d_b\theta_b^n \\
        &\propto e^{-2\pi i(n + \frac{1}{n})\frac{c_-}{24}} e^{\frac{2\pi L}{n^2\xi_r}\frac{c_-}{24}}\sum_{b\in\mathcal{C}_0}S_{ab}d_b\theta_b^n.
    \end{split}
    \end{align}
    In the last equation, we used that the transparent fermion $\psi$ braids trivially with other anyons and has $S_{a\psi}\theta_\psi^n\propto 1$ for even $n$. We can then obtain the expression in Eq.~\eqref{eq:fermionpartialrot}.
    
    \item Odd $n$. We again perform the $ST^nS$ modular transformation on the CFT character, which now transforms the spin structure of the torus as $ST^nS:$ [AP,AP]$\to$[P,AP] for odd $n$. We then have
    \begin{align}
        \begin{split}
            \chi_a\left(\frac{i\xi_r}{L}-\frac{1}{n}; [\mathrm{AP, AP}]\right)  &= \sum_b (ST^n S)_{ab} \chi_b\left(\frac{iL}{n^2\xi_r}+\frac{1}{n}; [\mathrm{P, AP}]\right) \\
            &=\sum_{v\in\mathcal{C}_v} (ST^n S)_{a,v}e^{\frac{2\pi i}{n}(h_v-\frac{c_-}{24})} e^{-\frac{2\pi L}{n^2\xi_r}(h_v-\frac{c_-}{24})} \\
        &\propto e^{-2\pi i(n + \frac{1}{n})\frac{c_-}{24}}\sum_{v\in\mathcal{C}_v} e^{-\frac{2\pi L}{n^2\xi_r}h_v}\sum_{b\in\mathcal{C}_0}S_{ab}d_b\theta_b^n S_{bv}e^{\frac{2\pi i}{n}h_v}.
        \end{split}
    \end{align}
\end{itemize}
The expression now involves the sum over the defects $v$ carrying the $\Z_2^f$ vortex, which labels the states in the periodic sector $\mathcal{C}_v$. Due to the exponentially dropping factor $e^{-\frac{2\pi L}{n^2\xi_r}h_v}$, the sectors carrying the lowest spin give the leading contributions. We then arrive at the expression in Eq.~\eqref{eq:fermionpartialrot} to the leading order.

\subsection{$\nu=2/3$ FQH state}
Here let us consider the partial rotation on the $\nu=2/3$ FQH state, described by $\U_{1}\times \U_{-3}$ Chern-Simons theory. Since this theory is Abelian, it allows the expression in terms of $\mathcal{C}_0\boxtimes\{1,\psi\}$ with a modular category $\mathcal{C}_0$; we have $\mathcal{C}_0=\{1,p,p^2\}$ with $p$ an Abelian anyon carrying spin $h_p=1/3$. The partial rotation $\mathcal{T}_1(2\pi/n)$ with small even $n$ are computed as $\mathcal{T}_1(2\pi/2)\propto -i, \mathcal{T}_1(2\pi/4)\propto i$.

To compute the partial rotation for odd $n$, we need to look at the periodic sector $\mathcal{C}_v$ carrying the $\Z_2^f$ vortex. The modular extension $\Breve{\mathcal{C}}$ of the $\nu=2/3$ state is described by $\U_4\times \U_{-12}$ Chern-Simons theory with the diagonal fermion $([2]_4, [6]_{12})$ condensed. Here, we label the anyon of $\U_{\pm m}$ theory as $[j]_m$ with $j\in\Z_m$. There are six simple objects in $\mathcal{C}_v$ carrying the $\Z_2^f$ vortex, given by $v_j := ([1]_4, [2j-1]_{12})$ with $1\le j\le 6$. The spins are $h_{v_1}=h_{v_5}=h_{v_7}=h_{v_{11}}=1/12, h_{v_3}=h_{v_9}=3/4$.

Let us study the simplest case at $n=1$. We have $\sum_b \theta_b S_{b,v} = \sum_b\theta_{b\times v^*}/\theta_{v^*} = \theta_v^{-1}$, so we get $\mathcal{T}_1(2\pi)=1$.
We then consider the partial rotation at $n=3$, $\mathcal{T}_1(2\pi/3)$. For $v=v_1, v_5, v_7, v_{11}$, one can see that $\sum_b \theta_b^3 S_{b,v}$ gives zero, so in the leading order the partial rotation vanishes.
Then, the main contribution comes from the subleading order with $v=v_3,v_9$. In that case we have $\sum_b \theta_b^3 S_{b,v}\propto 1$, so we get $\mathcal{T}_1(2\pi/3)=e^{\frac{2\pi i}{3}h_v} = i$. We summarize the results of the partial rotation for small $n$ in Table~\ref{tab:nu=2/3}. 
\begin{table}[h!]
\renewcommand*{\arraystretch}{1.5}
\centering
\begin{tabular}{c|c}
\hline
       &  $\mathcal{T}_1\left(\frac{2\pi}{n}\right)$ \\ \hline
\multirow{1}{*}{$\nu=2/3$}     &  $1,-i,i,i$  \\ \hline

\end{tabular}
\caption{The phases of the partial rotation $\mathcal{T}_1(\frac{2\pi}{n})$ for $n=1,2,3,4$ in $\nu=2/3$ state.}
\label{tab:nu=2/3}
\end{table}

Following the methodology detailed in the main text, we numerically evaluate the partial rotation of a $\nu = 2/3$ filled LLL of fermions, with model interaction $V_1 = 1$ plus a small perturbation $\delta V_3 = 0.1$. From the orbital space entanglement spectrum, we confirm that the obtained ground state is described by $\U_{1}\times \U_{-3}$ Chern-Simons theory. The result of $\mathcal{T}_1(\frac{2\pi}{n})$ in the
trivial sector is presented in Fig.~\ref{fig:nonchiral}, which converges to the expected phase shown in Table~\ref{tab:nu=2/3}.

\begin{figure}[htbp]
    \centering
    \includegraphics[width = 0.25\textwidth]{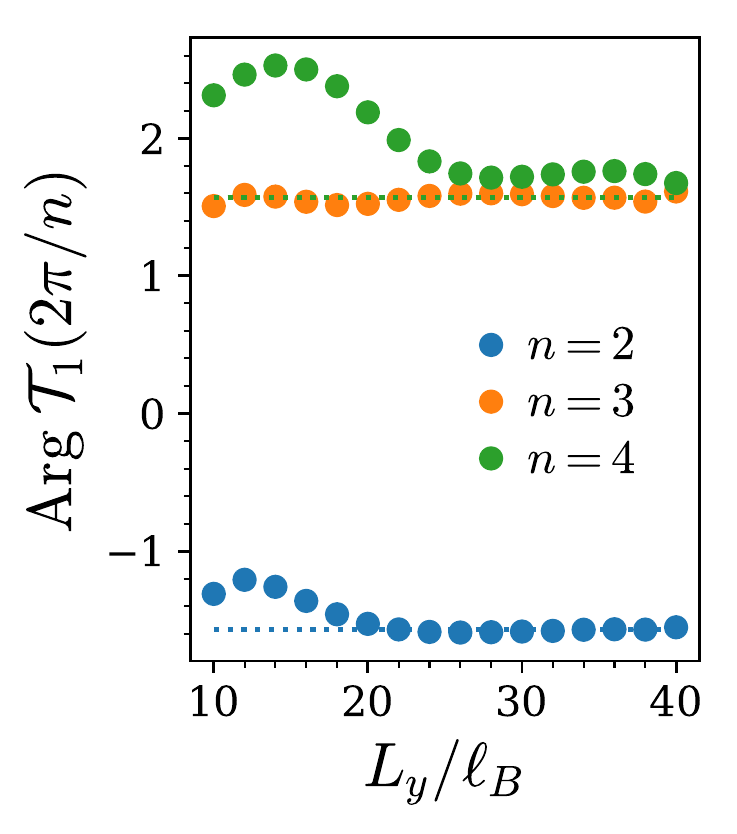}
    \caption{Partial rotation $\operatorname{Arg} \: \mathcal{T}_1(2\pi/n)$ of the fermionic $\nu = 2/3$ FQH state. The dotted lines are the CFT predictions given in Table~\ref{tab:nu=2/3}.}
    \label{fig:nonchiral}    
\end{figure}

\subsection{Partial rotation of Abelian fermionic states as obstructions to gapped edge}
Finally, let us comment that the partial rotation of an Abelian fermionic topological order can also define an obstruction to gapped edge state.
When the fermionic theory $\mathcal{C}=\mathcal{C}_0\boxtimes\{1,\psi\}$ admits a gapped edge, it can be shown that the modular category $\mathcal{C}_0$ must have a Lagrangian subgroup, i.e., a bosonic theory described by $\mathcal{C}_0$ must admit a bosonic gapped boundary~\cite{Kobayashi2022FQH}. This implies that the higher central charge of $\mathcal{C}_0$
\begin{align}
    \zeta_n(\mathcal{C}_0) = \frac{\sum_{a\in\mathcal{C}_0} \theta_a^n}{|\sum_{a\in\mathcal{C}_0} \theta_a^n|} \quad \text{with $\gcd(n,\frac{N_{\mathrm{FS}}(\mathcal{C}_0)}{\gcd(n,N_{\mathrm{FS}}(\mathcal{C}_0) )})=1$}
\end{align}
define obstructions to gapped boundary of an Abelian fermionic theory. So, comparing with the result of the partial rotation described above, $\mathcal{T}_1(2\pi/n)$ defines an obstruction to a gapped edge for even $n$ satisfying $\gcd(n,\frac{N_{\mathrm{FS}}(\mathcal{C}_0)}{\gcd(n,N_{\mathrm{FS}}(\mathcal{C}_0) )})=1$.
For the $\nu=2/3$ state described above, the partial rotation $\mathcal{T}_1(2\pi/n)$ with $n=2,4$ becomes an obstruction to gapped edge.

\section{Evaluating the action of partial rotation of an MPS wave function}

Here we explain how to evaluate the action of partial rotation of an MPS wave function, with quantum Hall systems as an example. In the following we will first review the quantum Hall DMRG setup developed in Ref.~\onlinecite{exactmps,FQHEDMRG,BosonicFQHEDMRG}. We consider an infinite cylinder geometry with $y$ running around the circumference, and $x$ running along the infinite direction (see Fig.~\ref{fig:cylinder}). We work in the Landau gauge and the corresponding LLL orbital takes the form (see Fig.~\ref{fig:Laughlin} (a))
\begin{equation}
    \varphi_n(x, y)=\frac{e^{i k_n y-\frac{1}{2 \ell_B^2}\left(x-k_y^n \ell_B^2\right)^2}}{\sqrt{L_y\ell_B \pi^{1 / 2}}}, \quad k_y^n=\frac{2 \pi n}{L_y}
\end{equation}
where $n \in \mathbb{Z}$ and orbital $n$ localizes at $k_n \ell_B^2 = 2 \pi n \ell_B / L_y$. We note that in order to perform the partial rotation, it is important to work in an eigenbasis in momentum $k_y$, which implicitly assumes translation symmetry, continuous or discrete, along the $y$ direction. For a lattice model or a $k$-space continuum model, such a basis can be obtained by hybrid Wannier localization in the $y$ direction, which maps the original model to a cylinder in mixed $(x, k_y)$ basis \cite{Tomo_2020_wannierization}. 

There is a natural one-dimensional iMPS representation of quantum Hall wavefunction in this basis,
\begin{equation}
    |\Psi[B]\rangle=\sum_{\left\{j_n\right\}}\left[\cdots B^{[0] j_0} B^{[1] j_1} \cdots\right]\left|\ldots, j_0, j_1, \ldots\right\rangle
\end{equation}
where $B^{[n] j_n}$ are $\chi \times \chi$ matrices and $\left|j_n\right\rangle, j_n \in\{0,1,\cdots,N_{\mathrm{boson}}\}$ represent the boson occupancy at orbital $n$. Both bond dimension $\chi$ and onsite boson number cutoff $N_{\mathrm{boson}}$ constrain the representability of the iMPS. However, as we will show in later sections of the Supplemental Materials, the partial rotation $\mathcal{T}_a(2\pi/n)$ quickly saturates to its true value at relatively small $\chi$ and $N_{\mathrm{boson}}$. 

One key ingredient to accelerate the DMRG algorithm and allow the action of partial rotation is to implement both particle number and momentum conservation,
\begin{equation}
    \hat{C} =\sum_n \hat{C}_n \equiv \sum_n\left ( \hat{N}_n-\frac{p}{q} \right ), \quad \hat{K} =\sum_n \hat{K}_n \equiv \sum_n n\left ( \hat{N}_n-\frac{p}{q} \right )
    \label{eq:QHMPS_conservation}
\end{equation}
where $\hat{N}_n$ is the number operator at site $n$ and $\nu = p/q$. Now for a generic conserved $\mathrm{U}(1)$ charge 
$\hat{Q} = \hat{C}$ or $\hat{K}$,
we can decompose the symmetry operator $\hat{Q}$ into $\hat{Q} = \sum_{n<\bar{n}} \hat{Q}_n+\sum_{n>\bar{n}} \hat{Q}_n = \hat{Q}_{\mathrm{A}} + \hat{Q}_{\mathrm{B}}$, where $\bar{n}$ is the auxiliary bond across A and B. For the moment we will assume A, B represent two sets of LLL orbitals and discuss the difference between orbital-space cut and real-space cut later. Then for a Schimidt decomposition across bond $\bar{n}$, $\ket{\Psi} = \sum_{\alpha} \lambda_{\alpha}^2 \ket{\alpha}_\mathrm{A} \otimes \ket{\alpha}_\mathrm{B}$, we can require the Schimidt states to be eigenstates of $\hat{Q}_{\mathrm{A/B}}$,
\begin{equation}
    \hat{Q}_\mathrm{B} \ket{\alpha}_\mathrm{B} = \bar{Q}_{\bar{n};\alpha}  \ket{\alpha}_\mathrm{B}
\end{equation}
At the level of MPS, local charge conservation on site $n$ takes the form,
\begin{equation} \label{eq:momentum_conservation}
    \left[\bar{Q}_{\bar{n} ; \beta}-\bar{Q}_{\bar{n} + 1 ; \alpha}-\hat{Q}_{n ; j}\right] B_{\alpha \beta}^{[n] j}=0
\end{equation}
where $\bar{n} < n < \bar{n} + 1$. This conservation law naturally generalizes to the corresponding symmetry action. For example, pictorially, momentum conservation pictorially means
\begin{equation}
\includegraphics[scale = 1.3]{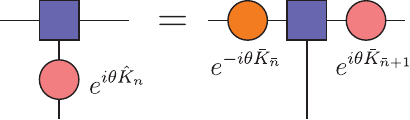}
\end{equation}
where $\hat{U} = e^{i \theta \hat{K}}$ is the symmetry action of rotating site $n$ along the $y$ direction by angle $\theta$.

\subsection{Partially rotate an MPS}

Now we explain how to exploit the implemented momentum conservation to perform the partial rotation. Naively, $\hat{K}_\mathrm{B} = \sum_{n>\bar{n}} \hat{K}_n$ has the interpretation of total momentum to the right of bond $\bar{n}$, so one would expect the action of partial rotation of angle $\theta$ to be
\begin{equation}
    e^{- i \theta \hat{K}_\mathrm{B}} \ket{\Psi} = e^{- i \theta \hat{K}_\mathrm{B}} \sum_{\alpha} \lambda_{\alpha} \ket{\alpha}_\mathrm{A} \otimes \ket{\alpha}_\mathrm{B} = \sum_{\alpha} \lambda_{\alpha}^2 e^{- i \theta \bar{K}_{\bar{n};\alpha}} \ket{\alpha}_\mathrm{A} \otimes \ket{\alpha}_\mathrm{B}
\end{equation}
This would be obvious for a finite MPS whose momentum eigenvalues $\bar{K}_{\bar{n};\alpha}$ are finite. However, for iMPS, a more rigorous proof presented below would be appreciated.

\begin{thm}
For an infinite MPS $\ket{\Psi}$, the expectation value $\mathcal{T}_a(\theta)$ of partial rotating all sites in B by angle $\theta$ is equivalent to inserting a $e^{- i \theta \bar{K}_{\bar{n};\alpha}}$ factor to the bond $\bar{n}$ across A, B.
\begin{equation}
    \bra{\Psi} \prod_{n > \bar{n}} e^{i \theta \hat{K}_n} \ket{\Psi} = \sum_{\alpha} \lambda_{\alpha}^2 \bra{\alpha}_\mathrm{A} \otimes \bra{\alpha}_\mathrm{B} e^{- i \theta \bar{K}_{\bar{n};\alpha}} \ket{\alpha}_\mathrm{A} \otimes \ket{\alpha}_\mathrm{B} = \sum_{\alpha} \lambda_{\alpha}^2 e^{- i \theta \bar{K}_{\bar{n};\alpha}}
\end{equation}
Pictorially,
\begin{equation}
\includegraphics[scale = 1.3]{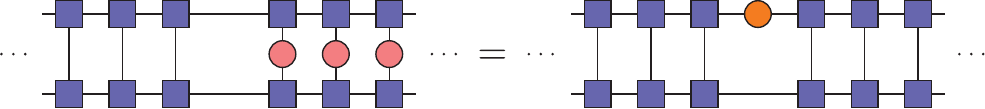}
\end{equation}
\end{thm}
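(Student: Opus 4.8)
The plan is to reduce the action of the half-infinite rotation operator to a single diagonal phase living on the entanglement cut, using the local $\mathrm{U}(1)$ charge conservation Eq.~\eqref{eq:momentum_conservation} that the iMPS tensors satisfy by construction. First I would write the Schmidt decomposition across bond $\bar{n}$, $\ket{\Psi}=\sum_\alpha \lambda_\alpha \ket{\alpha}_{\mathrm{A}}\otimes\ket{\alpha}_{\mathrm{B}}$, with the Schmidt vectors orthonormal and carrying definite bond charge labels. The object to evaluate is $\bra{\Psi}\,\hat U_{\mathrm{B}}\,\ket{\Psi}$ with $\hat U_{\mathrm{B}}=\prod_{n>\bar n}e^{i\theta \hat K_n}$, so the whole content is to show that $\hat U_{\mathrm{B}}\ket{\alpha}_{\mathrm{B}} = e^{-i\theta\bar K_{\bar n;\alpha}}\ket{\alpha}_{\mathrm{B}}$; orthonormality then collapses the overlap to $\sum_\alpha \lambda_\alpha^2 e^{-i\theta\bar K_{\bar n;\alpha}}$, which is the claimed identity.

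The core step is to propagate $\hat U_{\mathrm{B}}$ through the MPS tensors one site at a time. Since $e^{i\theta\hat K_n}$ is diagonal in the physical occupation with eigenvalue $e^{i\theta \hat K_{n;j}}$, the conservation law forces $\hat K_{n;j}$ to equal the difference of the charge labels on the two bonds adjacent to site $n$ on the support of $B^{[n]j}$. Hence acting with $e^{i\theta\hat K_n}$ on tensor $n$ is equivalent to conjugating that tensor by the diagonal bond operators $V_b=\mathrm{diag}\big(e^{i\theta\bar K_{b;\alpha}}\big)$ on its neighboring bonds. Taking the product over all $n>\bar n$ then telescopes: every intermediate bond operator appears once as $V_b$ and once as $V_b^{-1}$ and cancels, leaving only $V_{\bar n}$ at the cut and a single boundary factor at spatial infinity.

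The main obstacle, and the reason the iMPS case needs more than the trivial finite-chain argument, is making sense of the infinite product: $\hat K_{\mathrm B}=\sum_{n>\bar n} n(\hat N_n-\nu)$ has unbounded coefficients, so the naive eigenvalue of $\hat K_{\mathrm B}$ on $\ket{\alpha}_{\mathrm B}$ is formally divergent. I would dispose of this precisely through the telescoping above, which shows that all divergent bulk contributions cancel pairwise and that the net effect is finite and localized at the cut. The remaining and genuinely delicate point is to argue that the boundary factor at $n\to+\infty$ drops out: because the iMPS is translation invariant and normalized, the transfer matrix has a unique fixed point deep in the bulk where the bond charge labels approach their translation-invariant reference value, so this factor is a pure gauge that cancels between bra and ket (equivalently, it is absorbed into the overall ambiguity of bond momentum labels already noted in the main text).

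Finally I would fix the sign. With the main-text convention $\bar K_{\bar n;\alpha}=\langle \sum_{n<\bar n}\hat K_n\rangle_\alpha$ defining $\bar K_{\bar n;\alpha}$ as the \emph{complementary} (A-subsystem) momentum label, and the total momentum pinned to its reference value, the B-subsystem momentum eigenvalue is $-\bar K_{\bar n;\alpha}$, so $\hat U_{\mathrm B}\ket{\alpha}_{\mathrm B}=e^{i\theta\hat K_{\mathrm B}}\ket{\alpha}_{\mathrm B}=e^{-i\theta\bar K_{\bar n;\alpha}}\ket{\alpha}_{\mathrm B}$. This yields $\hat U_{\mathrm B}\ket{\Psi}=\sum_\alpha \lambda_\alpha\, e^{-i\theta\bar K_{\bar n;\alpha}}\,\ket{\alpha}_{\mathrm A}\otimes\ket{\alpha}_{\mathrm B}$, and pairing against $\bra{\Psi}$ gives $\sum_\alpha \lambda_\alpha^2\, e^{-i\theta\bar K_{\bar n;\alpha}}$, completing the argument.
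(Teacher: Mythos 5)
Your argument is correct in substance and rests on the same key ingredient as the paper's proof, namely the local momentum-conservation identity Eq.~\eqref{eq:momentum_conservation}, but you package it differently: you push the rotation through the chain site by site as a telescoping gauge transformation acting directly on the state, whereas the paper works at the level of transfer matrices, showing that the dressed fixed point $e^{-i\theta\bar{K}_{\bar{n};\alpha}}\rho_{\alpha\alpha'}$ is an eigenvalue-one eigenvector of the dressed transfer matrix $M_U$, hence its dominant eigenvector, so that $M_U^N \approx e^{i\phi}\,\bar{U}^{-1}M_1^N$. The two routes buy slightly different things: your telescoping makes manifest why the formally divergent eigenvalue of $\hat{K}_{\mathrm{B}}$ never appears and why everything localizes on the cut bond, while the transfer-matrix formulation controls the $N\to\infty$ limit automatically (via the spectral gap of $M_1$) and cleanly isolates the one thing neither argument can determine, the overall phase $e^{i\phi}$. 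On that last point your proof is loose: the leftover boundary factor at $n\to+\infty$, $V_{\bar{N}}^{-1}=\mathrm{diag}\bigl(e^{-i\theta\bar{K}_{\bar{N};\gamma}}\bigr)$, does \emph{not} ``cancel between bra and ket'' --- the rotation acts on the ket only --- and its entries grow without bound with $N$; the correct disposal is the one you offer only parenthetically and the paper states explicitly, namely that after contraction against the dominant eigenvector of the transfer matrix it collapses to a single undetermined overall phase, which is then fixed by the momentum-label gauge matching to the edge-CFT spectrum described in the subsequent subsection. With that clarification your proof is equivalent to the paper's.
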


\begin{proof}
To prove this theorem, we introduce the formulation of transfer matrices. For simplicity, we assume the unit cell size to be one in the following proof. The transfer matrix $M_U$ is defined to be $M_U = \sum_{j,j'} U_{j,j'} B^j \otimes \bar{B}^{j'}$. The most important property of transfer matrix $M_1$ (with no operator insertion) is that we can approximate $M_1^N$ by its dominant eigenvector $\rho$ with eigenvalue $\lambda = 1$,
\begin{equation} \label{eq:transfer_matrix}
\includegraphics[scale = 1.2]{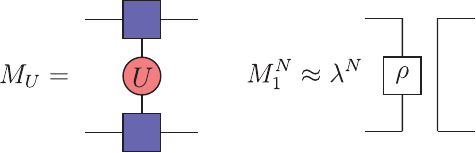}
\end{equation}
since all eigenvectors with eigenvalue $\lambda' < 1$ would be negligible in the iMPS limit $N \to \infty$. The theorem can be rewritten in a more precise form using the language of transfer matrices,
\begin{equation}
    M_U^N \approx e^{i \phi} \bar{U}^{-1} M_1^N, \quad \text{where } U = e^{i \theta \hat{K}_n}, \; \bar{U}^{-1} = e^{- i \theta \bar{K}_{\bar{n};\alpha}} \rho_{\alpha,\alpha'}
\end{equation}
where $\phi$ is an overall phase ambiguity. Using the momentum conservation in Eq.~\eqref{eq:momentum_conservation}, we can show $e^{- i \theta \bar{K}_{\bar{n};\alpha}} \rho_{\alpha,\alpha'}$ is an eigenvector of $M_U$ with eigenvalue $\lambda = 1$, which therefore must be the dominant eigenvector of $M_U$,
\begin{equation}
\includegraphics[scale = 1.2]{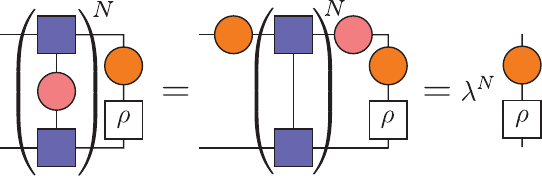}
\end{equation}
We note that $e^{i \phi} e^{- i \theta \bar{K}_{\bar{n};\alpha}} \rho_{\alpha,\alpha'}$ is also a dominant eigenvector, so there is an overall phase ambiguity that cannot be determined by this argument. We will see the consequence of this ambiguity in the next section. Combined with the property of transfer matrix in Eq.~\eqref{eq:transfer_matrix}, we have established $M_U^N \approx e^{i \phi} \bar{U}^{-1} M_1^N$ and therefore proved the theorem. 
\end{proof}

\subsection{Matching iMPS momentum labels to CFT momentums}

The most subtle but important aspect in the evaluation of partial rotation is to resolve some of the gauge degrees of freedom
in matching the momentum labels $\bar{K}_{\alpha}$ in the iMPS representation to the CFT momentum labels of a state.
The conserved-charges (i.e., charge and momentum) labels for particles on each site are given by Eq.~\eqref{eq:QHMPS_conservation}.  The conserved-charge labels for the Schmidt states across any bond is the accumulations of the those to the left of the cut; which is sensitive even to electrons and dipole that are far away from the entanglement cut.
As we ``grow'' the MPS chain in the iDMRG algorithm, such electrons/dipoles may be ``stuck'' based on the initialization of the DMRG.
For example, an electron far away may shift the conserved-charge labels via $\hat{C} \to \hat{C} + q$ and $\hat{K} \to \hat{K} + d_\text{electron} \hat{C}$, where $d_\text{electron}$ is the distance (proportional to number of DMRG steps) of the charge.
Therefore the momentum label $\bar{K}_{\bar{n};\alpha}$ has two integer ambiguity $x_0$ and $x_1$ to the determined,
\begin{equation}
    \bar{K}_{\alpha} \mapsto \bar{K}_{\alpha}' = \bar{K}_{\alpha} + x_1 \bar{C}_{\alpha} + x_0 .
\end{equation}
To reproduce the CFT calculation in Eq.~\eqref{eq:mainresult}, we can match the entanglement spectrum and the (1+1)D edge CFT spectrum to fix the momentum labels.
We demand, for the ground state corresponding to the vacuum, that (a) the highest weight state (corresponding to the lowest energy state of the entanglement Hamiltonian) has zero momentum and charge, and (b) that sectors with opposite charges have the same momentum labels.
(For the semion ground state, its entanglement spectrum has two highest weight state--we assign them to have charges $\pm 1/2$ and zero momentum.)

\subsection{Orbital cut v.s. real-space cut}

Now we are ready to talk about the difference between the orbital cut and real-space cut. The CFT prediction in Eq.~\eqref{eq:mainresult} assumes a real space partition. Before we explain the numerical realization, we first discuss the physical implication of a real-space cut at different locations. In quantum Hall systems, the entanglement spectrum depends on where the real-space cut is made.
In particular, we want to make the cut such that anyon $a$ can appear right on the cut, then the entanglement spectrum would agree best with the 1D edge CFT spectrum. For fermions, this happens when the cut is in the middle of two LLL orbital centers; for bosons, this happens when the cut is right at the LLL orbital center.

This is most obvious in the thin cylinder limit, where the $\nu = 1/2$ bosonic Laughlin state takes the form of
\begin{equation}
    \ket{\Psi} = \ket{\cdots 0 1 0 1 0 1\cdots}
\end{equation}
in the occupation basis. This is usually referred as the root configuration of the state (or pattern of zeros).
For the entanglement spectrum corresponding to the vacuum sector, we want to choose the boundaries of the unit cell such that the electron (i.e., the `1' of the root configuration) is in the middle of the unit cell.  (This ensures that entanglement spectrum has a unique highest weight state and a symmetry between positive and negative charge selection sectors.)
The real-space cut corresponding to boundaries of the unit cell.

Therefore, cutting through the $0$'s gives the trivial sector and cutting through the $1$'s (related by threading one flux quanta through the cylinder) gives the semion sector.
Furthermore, this also explains why for fermionic Laughlin states ($q$ odd), the unit cell cuts between a pair of neighboring sites.
This identification remains true at finite circumference $L$.

Finally, we briefly explain the numerical realization of the real-space cut. We take the most straightforward approach based on the real-space entanglement spectrum (RSES) algorithm developed in Ref.~\onlinecite{exactmps}. For a real-space partition at $x = x_c$, RSES algorithm reconstructs the orbital space MPS $\ket{\Psi[B]}$ obtained in iDMRG into a new one $\ket{\psi[B]_{\mathrm{RSES}}}$ such that a particular bond $\bar{n}_c$ gives the real-space Schmidt decomposition at $x_c$. This is accomplished by splitting each matrix $B^{[n]}$ into two $B^{[n]}_\mathrm{A}$ and $B^{[n]}_\mathrm{B}$ based on the weight of the orbital $\varphi_n$ supported on A, B, and then swap all $B^{[n]}_\mathrm{A}$ to the left of bond $\bar{n}_c$ and $B^{[n]}_\mathrm{B}$ to the right. Then we can easily evaluate the action of partial rotation with the techniques discussed above. We note that the difference between orbital cut and real-space cut is quite general in systems where band projection was used, since the Wannier function usually span across a few sites.


\section{Extracting higher central charge of the $\nu=1/2$ bosonic Laughlin state}

\subsection{Extracting higher central charge from scratch}

\begin{figure}[htbp]
    \centering
    \includegraphics[width = 0.5\textwidth]{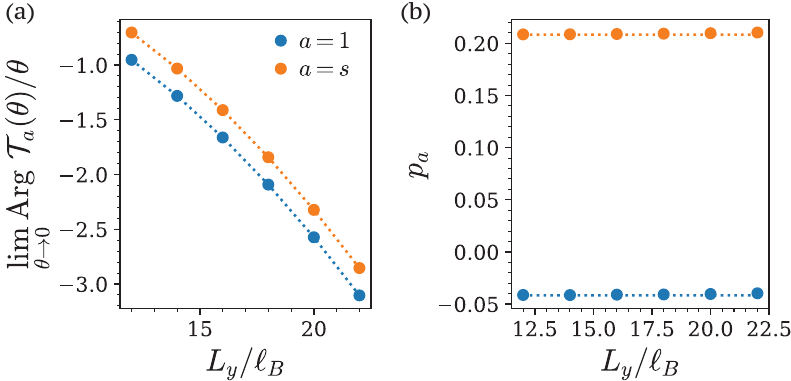}
    \caption{(a) Infinitesimal rotation $\lim_{\theta \to 0} \: \operatorname{Arg} \: \mathcal{T}_a(2\pi/n)$ of the $\nu = 1/2$ bosonic Laughlin state extracted using Eq.~\eqref{eq:schmidt}. (b) The corresponding momentum polarization $p_a = h_a - c_-/24$ extracted using Eq.~\eqref{eq:momentumpolarization}. The dotted lines are the CFT predictions.}
    \label{fig:DMRG_mp}    
\end{figure}

In the main text Fig.~\ref{fig:Laughlin}, we only present the numerical result of $\mathcal{T}_a(2\pi/n)$. We note that $\mathcal{T}_a(2\pi/n)$ itself is sufficient in determining the gappability of the edge, without referring to the actual value of the higher central charge $\zeta_n$. However, if one wants to extract $\zeta_n$, one also needs to know about the chiral central charge $c_-$, which could be obtained from either some prior knowledge about the topological order, or the momentum polarization \cite{Qi2012momentumpolarization,FQHEDMRG}, or certain entanglement measures \cite{Kim2022cminus, Kim2022modular, Fan2022cminus, Fan2022QHE}. Our numerical algorithm can easily reproduce the momentum polarization by taking the large $n$ limit, though the physics is completely different as explained in the main text. For completeness, we show the momentum polarization, and the resulting higher central charge in this section. 

In quantum Hall systems, the momentum polarization takes the universal form \cite{FQHEDMRG}
\begin{equation} \label{eq:mp_qh}
    \lim_{n \to \infty} \mathcal{T}_a\left(\frac{2\pi}{n}\right) = \exp \left[\frac{2 \pi i }{n} \left(h_a-\frac{c_-}{24}-\frac{\eta_H}{2 \pi \hbar} L^2\right)\right]
\end{equation}
where $\eta_H = \hbar \nu \mathcal{S}/ 8 \pi \ell_B^2$ is the universal Hall viscosity with $\mathcal{S}$ being the topological shift \cite{Read_2009_shift}. We first evaluate the action of infinitesimal rotation and then fit the result to Eq.~\eqref{eq:mp_qh} (see Fig.~\ref{fig:DMRG_mp} (a)), where we find $\mathcal{S} = 1.998$ for both topological sectors, in excellent agreement of the CFT prediction $\mathcal{S} = 2$. Subtracting off the the Hall viscosity contribution, we plot the resulting momentum polarization $p_a = h_a - c_-/24$ in Fig.~\ref{fig:DMRG_mp} (b), which again gives the expected $c_- = 1$ and $h_s = 1/4$. We note that we can only reach $L_y = 22 \ell_B$ using the largest practical onsite boson cutoff $N_{\mathrm{boson}}$, whereas for partial rotation we can easily reach $L_y > 40 \ell_B$. This is because a much higher precision is required in evaluating $\mathcal{T}_a\left(\theta\right)$ in order to reliably extract $p_a$, a point we will return to in the next section.

\begin{figure}[htbp]
    \centering
    \includegraphics[width = 0.5\textwidth]{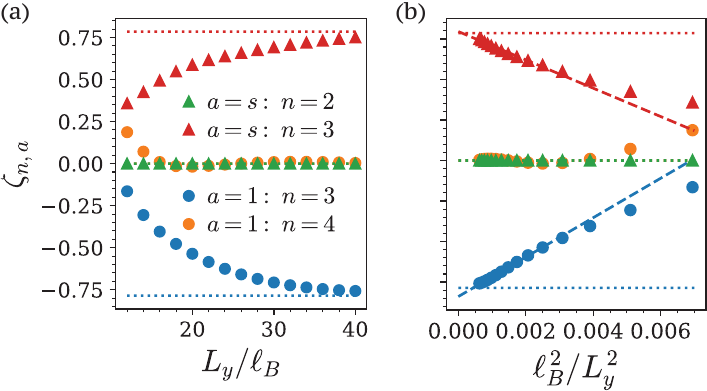}
    \caption{(a) (Twisted) higher central charge $\zeta_{n,a}$ extracted from $\mathcal{T}_a(2\pi/n)$ in Fig.~\ref{fig:Laughlin} (b) and chiral central charge $c_- = 1$ in Fig.~\ref{fig:DMRG_mp} using the main result Eq.~\eqref{eq:mainresult}. The dotted lines are the CFT predictions as usual. (b) Extrapolating $\zeta_{n,a}$ as a function of $\ell_B^2/L_y^2$.}
    \label{fig:zeta}    
\end{figure}

Using $c_- = 1$ and $h_s = 1/4$ extracted from momentum polarization, we can extract the (twisted) higher central charge $\zeta_{n,a}$ using the main result Eq.~\eqref{eq:mainresult}. As shown in Fig.~\ref{fig:zeta}, $\zeta_{n,a}$ all converge to the expected values as shown in Table \ref{tab:phases} at sufficiently large $L_y$.

\subsection{Numerical performance of partial rotation}

\begin{figure}[htbp]
    \centering
    \includegraphics[width = 0.5\textwidth]{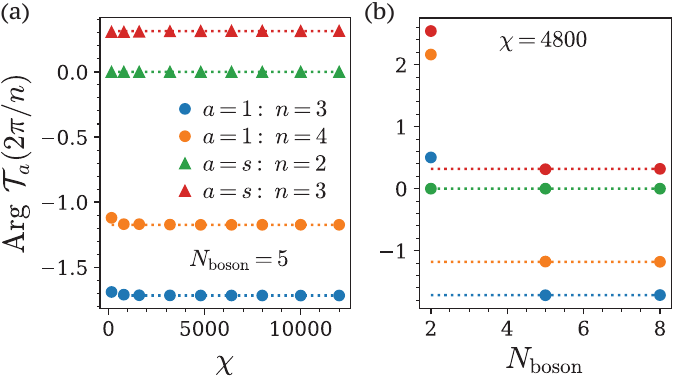}
    \caption{Partial rotation $\mathcal{T}_a(2\pi/n)$ as a function of bond dimension $\chi$ and onsite boson number cutoff $N_{\mathrm{Boson}}$ at $L_y = 40 \ell_B$. $\mathcal{T}_a(2\pi/n)$ saturates at relatively small $\chi = 1600$ and $N_{\mathrm{Boson}} = 5$.}
    \label{fig:performance}    
\end{figure}

Now we briefly discuss the numerical performance of our protocol to extract higher central charge from partially rotating an MPS wavefunction. Compared to free fermion models, e.g. the Kitaev model, the MPS representation of a topological state wavefunction is generically not exact. Both the bond dimension $\chi$ and the onsite boson number cutoff $N_{\mathrm{Boson}}$ put constraints on the representability of the MPS. However, the quantity we are interested in, i.e. the partial rotation $\mathcal{T}_a(2\pi/n)$ rapidly saturates to its true value up to the largest system size we reach (see Fig.~\ref{fig:performance}). In practice, for a relatively small $\chi$ and $N_{\mathrm{Boson}}$, we can already obtain $\mathcal{T}_a(2\pi/n)$ and therefore $\zeta_{n,a}$ to rather high precision.

For readers who are familiar with the momentum polarization, we note that the precision required in the evaluation of $\mathcal{T}_a(\theta)$ is much higher in order to reliably extract the momentum polarization $p_a$ due to the Hall viscosity term that is quadratic in cylinder circumference $L_y$. Specifically, the error in $\zeta_{n,a}$ and $p_a$ are related to the error in $\operatorname{Arg}\mathcal{T}_a(\theta)$ by
\begin{equation}
    \Delta \zeta_{n,a} \sim \frac{\Delta \operatorname{Arg} \mathcal{T}_a(\theta)}{\operatorname{Arg} \mathcal{T}_a(\theta)}, \quad \Delta p_a \sim L_y^2 \frac{\Delta \operatorname{Arg} \mathcal{T}_a(\theta)}{\operatorname{Arg} \mathcal{T}_a(\theta)}
\end{equation}
We note that $\zeta_{n,a}$ and $p_a$ are both $O(1)$, but $L_y^2$ can be fairly large to avoid finite size effect. Therefore, a much smaller $\chi$ and $N_{\mathrm{Boson}}$ is required to reliable extract the higher central charge compared to that one would use for momentum polarization.
 
The main numerical challenge in our protocol to extract the higher central charge is the interpolating between higher central charge and momentum polarization discussed in an earlier section. We need to reach system size $L_y > n^2 \xi_r$ to reliably extract higher central charge $\zeta_{n,a}$. In quantum Hall systems, by comparing Eq.~\eqref{eq:momentumpolarization} and Eq.~\eqref{eq:mp_qh}, we find
\begin{equation}
    \xi_r=\sqrt{\frac{2 \pi^2}{3}} \frac{c_{-}}{\nu \mathcal{S}} \ell_B = \sqrt{\frac{2}{3}} \pi \ell_B
\end{equation}
where the last equality assumes the $\nu = 1/2$ bosonic Laughlin state. In order to obtain $\zeta_{n,a}$ with $n \geq 5$, we need to reach $L_y \geq 25 \sqrt{\frac{2}{3}} \pi \ell_B \approx 64 \ell_B$, which can be challenging in practice.



\vfill



\end{document}